\newtheorem{lemma}{Lemma}
\newtheorem{proposition}{Proposition}
\theoremstyle{remark}
\newtheorem{rmk}{Remark}
\title{Exact and asymptotic solutions of the call auction problem}
\author{Ioane Muni Toke
\\ University of New Caledonia -- ERIM, Noumea, New Caledonia,
\\ Ecole Centrale Paris -- Chair of Quantitative Finance, Paris, France.}
\date{This version: \today}
\begin{document}
\maketitle

\begin{abstract}
The call auction is a widely used trading mechanism, especially during the opening and closing periods of financial markets. 
In this paper, we study a standard call auction problem where orders are submitted according to Poisson processes, with random prices distributed according to a general distribution $F$, and may be cancelled at any time.
We compute the analytical expressions of the distributions of the traded volume, of the lower and upper bounds of the clearing prices, and of the price range of these possible clearing prices of the call auction.
Using results from the theory of order statistics and a theorem on the limit of sequences of random variables with independent random indices, we derive the weak limits of all these distributions.
In this setting, traded volume and bounds of the clearing prices are found to be asymptotically normal, while the clearing price range is asymptotically exponential. All the parameters of these distributions are explicitly derived as functions of the parameters of the incoming orders' flows.
\end{abstract}

\section{Introduction}

During the past decades, financial markets have progressively adopted forms of continuous double auctions as the main structure of exchange.
In a continuous double auction, traders can submit trading orders at any time, and these orders are immediately treated. A centralized structure, the order book, stores the list of non-filled orders. Any buy (resp. sell) order submitted with a limit price lower (resp. higher) than the best ask (resp. bid) price is stored in the order book (limit order). Any  buy (resp. sell) order submitted with a price higher (resp. lower) than the best ask (resp. bid) is executed (filled), and the order book is modified accordingly (market order or crossing limit order). Finally, any limit order waiting in the order book can be cancelled at any time.
As of today, major trading systems use the continuous double auction at least during the main trading day period. 
As a consequence, microstructure theory has a large and growing body of literature focusing on the continuous double auction and order book modelling, both empirical \citep[see e.g.][among others]{Biais1995,Challet2001,Bouchaud2002,Potters2003,Mike2008} and theoretical \citep[see e.g.][]{Smith2003,Preis2006,Cont2010,Abergel2013,MuniToke2014}. See also \citet{Bouchaud2009,Chakraborti2011} for some review material.

The choice of the continuous double auction was mainly done during the computerization of the financial stock exchanges all over the world, starting in the 1980's. Such a choice was however not obvious, and another trading mechanism, the periodic call auction, has had its supporters.
In a periodic call auction, trading orders can be submitted continuously but are filled only periodically, using some clearing mechanism. During the auction time, all submitted orders are stored (and may be cancelled at any time), whatever their submission price is. Then at some pre-determined time the auction is closed, and the cumulative bid and offer are computed as functions of the price. The clearing price is chosen upon consideration of these functions, following a set of pre-agreed rules that aim at maximizing the trading volume, minimizing the market imbalance, etc. \citep[see e.g.][for examples of clearing rules]{ComertonForde2006,ComertonForde2006b}. Once a clearing price is computed, all exchanges are done all at the same time. 
Proponents of the call auction mechanism argued that the call auction might help reducing trading cost and enhancing price discovery \citep[see e.g.][]{Pagano2003,Schwartz2003}, and the call auction has been used as  the main trading mechanism of the Arizona Stock Exchange that has been operational from 1990 to 2001. 
Note also that recent trading accidents due to uncontrolled algorithmic trading have raised some new interests for the call auction: its non-continuous periodic trading times may appear as a potential solution against "flash crashes" such as the one observed on the NYSE on May 6th, 2010.

As of today, the call auction is used in different trading procedures, according to rules that often depend on the liquidity of the traded stocks.
For example, the call auction is the main trading mechanism for the least traded equities on the Euronext French regulated market. These stocks are not continuously traded but are actually fully exchanged using call auctions that take place once a day for some stocks, and twice a day for most of them, at 11:30 a.m. and 4:30 p.m. \citep{Euronext2013}.
A better known example is the use of the call auction to determine opening and closing prices for liquid stocks, that are otherwise traded according to a continuous double auction throughout the trading day.
For example, the typical trading day for continuously-traded liquid stocks on the Euronext French regulated market is as follows. The trading day starts with an opening call auction until 9:00 a.m.: all orders submitted before this time are "automatically recorded in the Central Order Book without giving rise to trades". Then at 9:00 a.m., the clearing of the call auction takes place, which determines the opening price. From 9:00 a.m. to 5:30 p.m., trading is done according to the continuous double auction.
At 5:30 p.m. starts the closing call auction. During 5 minutes, submitted orders are again "automatically recorded in the Central Order Book without giving rise to trades". At 5:35 p.m., the market is cleared, and this closing auction determines the closing price. The trading day ends with five minutes of "Trading at last" in which trades can be made at the closing price \citep[see][for more details]{Euronext2013,Euronext2014}.

Despite the wide interest in the call auction, microstructure literature on this mechanism is scarce. The seminal work in this field is due to \citet{Mendelson1982}, in which the call auction described here is called a "clearing house" mechanism.
These results were later used for a comparison between call auction and continuous double auction by \citet{DomowitzWang1994}, and the paper enjoys an increasing trend in citations in recent microstructure papers.
\citet{Mendelson1982} models a call auction in which buy and sell orders are submitted with size one and a price uniformly distributed on some interval $(0,m)$. In the case where buy and sell orders are assumed to be Poisson processes with \emph{identical} parameters, the author derives the distribution of the traded volume
as well as the first moments of the price distribution. However, in the general case with potential market imbalance (i.e. with different processes governing the arrival of buy and sell orders), weaker results are only obtained through the use of asymptotic renewal theory results, in the form of second-order expansions of the first and second moments of the traded volume and the price distributions. 
The main assumption required for these approximations to be acceptable is that the market must be "thick" in Mendelson's words, i.e. very liquid. Therefore the results mentionned above are not valid in the case of an illiquid market where the probability that no transaction occurs during a call auction is not negligible.

In this paper, we provide the complete solution of the call auction described in \cite{Mendelson1982}, even in the case where this seminal paper stated that such general forms did not exist. We show that the problem of determining the traded volume and the price distributions is analytically tractable, even in the general case of market imbalance and a general price distribution. Perhaps more importantly, we derive rigorous weak limits of the exact distributions obtained, showing the asymptotic normality of the traded volume and the price distributions as the liquidity increases. Similarly, it is proved that in this setting the clearing price range, a proxy to the post-clearing spread, is asymptotically exponentially distributed. 

The remainder of the paper is organized as follows. In section \ref{sec:PreliminaryResults}, the call auction is described and it is shown that, conditionally on the number of submitted bid and ask orders, the traded volume and price distribution are easily analytically tractable. In section \ref{sec:TradedVolumeDistribution}, the exact unconditional distribution of the traded volume is derived, as well as its weak limit when the liquidity increases. It is shown that the orders' price distribution does not influence this distribution, and that the market imbalance (ratio between the expected number of bid and ask orders) has a non trivial influence on the limit distribution. Using in particular fundamental results of the theory of order statistics, section \ref{sec:PricesDistributions} derives similar results for the distributions of the lower and upper clearing price, and section \ref{sec:PriceRangeDistribution} for the distribution of the range of potential clearing prices. The scaling constants are again non-trivial and depend on the market imbalance as well as the orders' price distribution.
As all these results are derived without any cancellation mechanism, section \ref{sec:Cancellation} shows that, thanks to a simple substitution in the model parameters, all results can easily be generalized to the case where submitted limit orders are cancelled after an exponentially distributed lifetime.

To our knowledge, all these results are new.

\textbf{Notations:} We will use the following (usual) notations in the remainder of the paper: 
$\mathcal N(m,\sigma)$ denotes the normal distribution with mean $m$ and standard deviation $\sigma$ ; 
$\Phi$ denotes the cumulative distribution function of $\mathcal N(0,1)$ ; 
$\mathcal B(n,p)$ denotes the binomial distribution with $n$ trials and success probability $p$ ; 
$\mathcal E(\lambda)$ denotes the exponential distribution with parameter $\lambda$ (inverse of the mean) ;
$\overset{d}{\underset{n\to+\infty}\longrightarrow}$ denotes the weak convergence as $n$ tends to infinity \citep[see e.g.][]{Billingsley1999} ; 
$\lfloor x\rfloor$ denotes the integer part of a real number $x$ ; $\triangleq$ will be used for the definition equality.
Finally, all random variables and processes subsequently used in this paper are assumed to be defined on some probability space $(\Omega,\mathcal F,\mathbf P)$.

\section{Description of the call auction and preliminary results}
\label{sec:PreliminaryResults}

Let us consider a standard call auction for the exchange of a given financial product.
Ask (sell) orders are submitted at random times according to a Poisson process with parameter $\lambda_A$. We assume that all ask orders are unit-sized, and that their prices form a set of independent random variables identically distributed according to some distribution $F$.
Similarly, bid (buy) orders are submitted at random times according to an independent Poisson process with parameter $\lambda_B$. It is assumed as well that all bid orders are unit-sized, and that their prices form a set of independent random variables identically distributed according to the distribution $F$. 
The assumption that bid and ask orders are submitted according to the same distribution may seem very basic, but it is the one used by \citet{Mendelson1982} and it is actually fundamental to the following analysis (see Remark \ref{rmk:SameBuySellPriceDistribution}).
It is a purely zero-intelligence assumption, where no prior information on the price is incorporated into the model.
Note however that we have no restriction on $F$, which consequently can potentially accomodate a wide range of empirical fits. In any case, without further empirical investigations on the placement of order during a call auction, it should not be discarded, given its analytical potential.

Market imbalance, which represents the relative shares of bid and ask submissions, is thus described by the parameter $\alpha \triangleq \frac{\lambda_A}{\lambda}$ where $\lambda\triangleq\lambda_A+\lambda_B$ is the total rate of orders submission. Obviously, the market is symmetric when $\alpha=\frac{1}{2}$.

The call auction opens (i.e. starts accepting order submission) at time $0$ and closes at a deterministic time $T>0$.
For the ease of exposition and computation, no cancellation of a submitted trading order is allowed for the moment, but this restriction will easily be lifted in section \ref{sec:Cancellation}.
Once the call auction is closed, all submitted orders are taken into account and the market is cleared, i.e. a clearing price (or exchange price, or trading price in the literature) is decided.
There are several types of rules used in practice to determine the clearing price \citep[see e.g.][Table 1]{ComertonForde2006}, most of which start with maximizing the traded volume. For a given clearing price $p$, the traded volume is defined as the quantity of shares that can be matched at that price, ans is equal to the minimum of the total ask quantity offered below $p$ and the total bid quantity offered above $p$.
There is not necessarily a unique price that maximizes the traded volume, and more rules may be used in practice to determine a unique price. In this study however, one of our main output will be the range of possible clearing prices maximizing the traded volume, and we will not use further rules to determine a unique price.

Let us rephrase this description mathematically. Let $A(p)$ be the cumulative number of ask orders up to price $p$ at time $T$. $p\mapsto A(p)$ is $\mathbf P$-almost surely a positive non-decreasing right-continuous step function, with unit-size steps.
Similarly, $B(p)$ is the cumulative number of bid orders \emph{down to} price $p$, and $p\mapsto B(p)$ is $\mathbf P$-almost surely a positive non-increasing left-continuous step function, with unit-size steps.
Conditionally on $A(\infty)\triangleq\lim_{p\to+\infty} A(p)\geq 1$ and $B(-\infty)\triangleq\lim_{p\to-\infty} B(p)\geq 1$, i.e. assuming that there is at least one buy order and one sell order, $p$ is a clearing price maximizing the traded volume $V$ if and only if $A(p)=B(p)$.
Since all orders are unit-sized, the clearing price maximizing the traded volume is $\mathbf P$-almost surely not unique. Furthermore, the monotony of $A$ and $B$ ensures that the set of possible exchange prices is an interval $(L,U)$. A detailed illustration is provided on figure \ref{fig:CallAuctionPrinciple} (some of the notations used there will be introduced later in the text).
\begin{figure}
\centering
\includegraphics[width=0.8\textwidth]{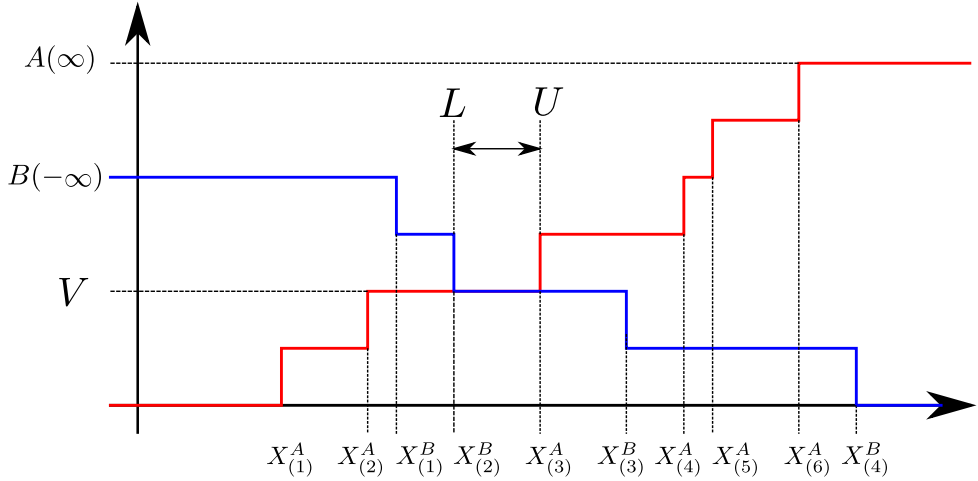}
\label{fig:CallAuctionPrinciple}
\caption{Example of the clearing mechanism ending a call auction, with price on the $x$-axis and volumes on the $y$-axis. The cumulative number of ask orders $A(p)$ is in red, while the cumulative number of bid orders $B(p)$ is in blue. $V$ is the traded volume and $(L,U)$ is the range of possible clearing prices.}
\end{figure}

The basic finding that helps us solving the general call auction problem is that, conditionally on the random variables $A(\infty)$ and $B(-\infty)$, which denote the number of ask and bid orders submitted upon the closing of the call auction, the system is actually easily tractable using the framework of order statistics, as stated in the following lemma.

\begin{lemma}
\label{lem:ConditionalDistributions}
Let $(m,n)\in(\mathbb N^*)^2$.
Conditionally on the set $\left\{A(\infty)=m,B(-\infty)=n\right\}$, 
\begin{enumerate}
	\item the traded volume $V$ is hypergeometrically distributed:
\begin{equation}
	\label{eq:ConditionalTradedVolume}
	\forall k\in\{0,\ldots,\min(n,m)\},\; 
	\mathbf P(V=k \vert A(\infty)=m,B(-\infty)=n) = \frac{\binom{m}{k}\binom{n}{k}}{\binom{m+n}{n}},
\end{equation}
and the distribution of the traded volume does not depend on the price distribution $F$ ;
	\item the lowest possible clearing prices $L$ maximizing the traded volume is distributed as the $n$-th order statistics of a random sample with size $n+m$ distributed with distribution function $F$ ;
	\item the highest possible clearing prices $U$ maximizing the traded volume is distributed as the $(n+1)$-th order statistics of a random sample with size $n+m$ distributed with distribution function $F$.
\end{enumerate}
\end{lemma}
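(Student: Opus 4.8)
The plan is to argue everything conditionally on the event $\{A(\infty)=m,\,B(-\infty)=n\}$ and to reduce the whole configuration to a single pooled sample of i.i.d.\ prices equipped with a uniformly random ask/bid labelling. First I would note that, since the price marks attached to the two (independent) Poisson processes are i.i.d.\ $\sim F$ and independent of the arrival times, the standard marking property of Poisson processes shows that conditioning on the counts $m$ and $n$ leaves the $m$ ask prices and the $n$ bid prices as a family of $m+n$ i.i.d.\ draws from $F$; by exchangeability this is the same as drawing $m+n$ i.i.d.\ prices and then declaring a uniformly chosen subset of size $m$ to be the asks. This is precisely the point where the common distribution $F$ for bids and asks is used. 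Write $X_{(1)}\le\cdots\le X_{(m+n)}$ for the order statistics of the pooled sample and, for each $k$, let $a_k$ (resp.\ $b_k=k-a_k$) denote the number of asks (resp.\ bids) among the $k$ smallest pooled prices.

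The core step is an elementary bookkeeping for the step functions $A$ and $B$. For a price $p$ strictly between two consecutive pooled order statistics $X_{(k)}$ and $X_{(k+1)}$, right-continuity of $A$ gives $A(p)=a_k$, while $B(p)$ counts bids with price at least $p$, hence $B(p)=n-b_k=a_k+(n-k)$. Thus the clearing condition $A(p)=B(p)$ holds if and only if $k=n$, so the set of clearing prices is the interval delimited by $X_{(n)}$ and $X_{(n+1)}$; a short separate check, examining the labels of $X_{(n)}$ and $X_{(n+1)}$ and (when $F$ has atoms) possible ties, settles whether the endpoints themselves are clearing prices. In all cases the infimum $L$ and supremum $U$ of the clearing set are $X_{(n)}$ and $X_{(n+1)}$ respectively, i.e.\ the $n$-th and $(n+1)$-th order statistics of a size-$(m+n)$ sample from $F$, which is parts 2 and 3.

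For part 1, at any clearing price the traded volume is $V=A(p)=a_n$, the number of asks among the $n$ smallest pooled prices; a one-line monotonicity check on $\min(A(p),B(p))=\min(a_k,\,a_k+n-k)$ confirms that the volume is maximised exactly on $(X_{(n)},X_{(n+1)})$ with value $a_n$. Since the ask/bid labelling is a uniformly random subset of size $m$ of $\{1,\dots,m+n\}$, the number of asks falling in the fixed block of the $n$ smallest positions is the number of successes when $n$ items are drawn without replacement from a population of $m+n$ containing $m$ asks, i.e.\ hypergeometric; using $\binom{n}{n-k}=\binom{n}{k}$ gives $\mathbf P(V=k\mid A(\infty)=m,B(-\infty)=n)=\binom{m}{k}\binom{n}{k}/\binom{m+n}{n}$ on $\{0,\dots,\min(m,n)\}$, which is \eqref{eq:ConditionalTradedVolume}. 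As this argument never used the numerical values of the prices, only the labelling, the conditional law of $V$ does not depend on $F$.

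I expect the main obstacle not to be the combinatorics — which is just the hypergeometric/order-statistics dictionary above — but the careful treatment of the boundaries: identifying precisely which pooled order statistics are $L$ and $U$ requires tracking the left/right continuity of $B$ and $A$ at the jump points, and, for fully general $F$ with atoms, dealing with ties (where the interval of clearing prices may degenerate to a single point). One also has to state cleanly the reduction to "pooled i.i.d.\ sample plus uniform labelling" via the marking property. None of this is deep, but it is the part that must be written with some care.
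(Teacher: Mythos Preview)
Your proposal is correct and follows essentially the same combinatorial route as the paper's first proof: reduce to a pooled i.i.d.\ sample with a uniformly random ask/bid labelling, observe that $A(p)=B(p)$ forces the clearing interval to be $(X_{(n)},X_{(n+1)})$, and read off the hypergeometric law from the labelling. The only cosmetic differences are that the paper counts bids among the top $m$ positions where you count asks among the bottom $n$ (equivalent by $\binom{n}{n-k}=\binom{n}{k}$), and that the paper additionally sketches a second proof by direct integration of order-statistic densities, which your write-up does not need.
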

\begin{proof}
Assume that the call auction is closed and that $A(\infty)=m$ and $B(-\infty)=n$. 
All $n+m$ orders form a set of $n+m$ (independent and identically distributed) points on $\mathbb R$. Let $x_1<\ldots<x_{n+m}$ be this increasing finite sequence of points on the real line. Let us consider this set of points given and decide whether a given point $x_i$ is an ask order ($m$ points should be "ask" in the whole set) or a bid order ($n$ points should be "bid" in the whole set). There is obviously $\binom{n+m}{n}$ different ways to decide this attribution.
Assume now that exactly $k$ points among the highest $m$ ones $\left\{x_{n+1},\ldots,x_{n+m}\right\}$ are bid orders. Then necessarily there are exactly $n-k$ bid orders and $k$ ask orders among lowest $n$ points $\left\{x_1,\ldots,x_{n}\right\}$. As a consequence, $A(p) = k = B(p)$ for any $p$ in $(x_n,x_{n+1})$. The monotonicity of $A$ and $B$ ensures that we cannot have $A(p) = B(p)$ outside this interval.
Henceforth, the traded volume is $k$ if and only if there is $k$ bid orders among the highest $m$ orders, and $n-k$ among the lowest $n$ orders. There is obviously $\binom{m}{k}\binom{n}{n-k}$ ways to do this, hence the first result given at equation \eqref{eq:ConditionalTradedVolume}.
Finally, the same argument stating the monotonicity of $A$ and $B$ ensures that a necessary and sufficient condition for a price $p$ to be an exchange price maximizing the traded volume is to lie in the price interval $(x_n,x_{n+1})$, hence the second and third results.

For the sake of completeness, we provide the sketch of the full proof by calculus, leaving the details of the computation to the reader. Let us consider the call auction closed. Let $(X^A_i)_{i=1,\ldots,A(\infty)}$ be the (almost surely finite) sequence of prices of the submitted ask orders, and $(X^B_j)_{j=1,\ldots,B(-\infty)}$ be the (almost surely finite) sequence of prices of the submitted bid orders. Using standard notations for order statistics, let $X^A_{(i)}$ denote the $i$-th ask order when ordered increasingly in price : $X^A_{(1)}<\ldots<X^A_{(A(\infty))}$. Similarly, $X^B_{(j)}$ is the $j$-th bid order when ordered increasingly in price. We will also use the following conventions : $X^A_{(0)}=X^B_{(0)}=-\infty$ and $X^A_{(N)}=+\infty$ (resp. $X^B_{(N)}=+\infty$) if $N>A(\infty)$ (resp. $N>B(-\infty)$).
One can verify that:
\begin{align}
	& \mathbf P\left(V=k \vert A(\infty)=m,B(-\infty)=n\right)
	\nonumber \\
	= & \mathbf P\left(X^A_{(k)}<X^B_{(n-k+1)}, X^A_{(k+1)}>X^B_{(n-k)} \vert A(\infty)=m, B(-\infty)=n\right).
\end{align}
Now, using the independence of the bid and ask processes and knowing the joint distributions of any couple of order statistics \citep[see e.g.][for textbooks on order statistics]{David2003,Arnold2008}, we write the density of the quadruplet $(X^A_{(k)}, X^A_{(k+1)},X^B_{(n-k)},X^B_{(n-k+1)})$ conditionally to $A(\infty)=m,B(-\infty)=n$ as the function $g$ defined for any reals $a<b, c<d$ by:
\begin{align}
	g(a,b,c,d) = & \frac{m!}{(k-1)!(m-k-1)!} [F(a)]^{k-1} [1-F(b)]^{m-k-1} f(a) f(b)
	\nonumber \\
	& \times \frac{n!}{(k-1)!(n-k-1)!} [F(c)]^{n-k-1} [1-F(d)]^{k-1} f(c) f(d).
\end{align}
Integrating in $d$ then $c$ gives after some computation:
\begin{align}
	& \mathbf P\left(V=k \vert A(\infty)=m,B(-\infty)=n\right) \nonumber
	\\ = & \int_{\mathbb R} \int_{]a,+\infty[} \frac{m!}{(k-1)!(m-k-1)!} [F(a)]^{k-1} [1-F(b)]^{m-k-1} f(a) f(b) \nonumber
	\\ & \times \left[ F_{X_{(n-k+1)}}(b) - F_{X_{(n-k+1)}}(a) + \binom{n}{k} [F(b)]^{n-k} [1-F(b)]^{k} \right] 
	\, \textrm{d}b \,\textrm{d}a
\end{align}
where $F_{X_{(n-k+1)}}$ is the cumulative distribution function of the $(n-k+1)$-th order statistics of a sample of size $n$.
Recall that if $\beta(x,i,j) \triangleq \int_0^x u^{i-1}(1-u)^{j-1} \,du$ is the incomplete beta function, then $F_{X_{(n-k+1)}}(u)= \frac{n!}{(n-k)!(k-1)!}\beta(F(u),n-k+1,k)$.
Using the changes of variables $u=F(a)$ and $v=F(b)$ and some computations involving standard identities for the beta functions yields the result given at equation \eqref{eq:ConditionalTradedVolume}.
\end{proof}

Thanks to this simple but yet unnoted result (to our knowledge), we can proceed to the analytical exact and asymptotic solutions for the general problem.

\begin{rmk}
\label{rmk:SameBuySellPriceDistribution}
We underline that Lemma \ref{lem:ConditionalDistributions} does not stand if the distributions of the prices of ask orders $F_A$ and bid orders $F_B$ are assumed to be different. The first proof given above uses the fact that all "bid" and "ask" tag attributions to the submitted orders are equiprobable when $F_A=F_B=F$. If $F_A\neq F_B$, then this is not the case and these probabilities are price-dependent: the probability that an order submitted at a price lower or equal than $p$ is an ask order is $\frac{\lambda_A F_A(p)}{\lambda_A F_A(p)+\lambda_B F_B(p)}$. Therefore, any reasoning mimicking the first proof would require the evaluation of the traded volume conditionally on the placement of the $n$-th order, which does not appear to be easily tractable.
Trying to go as a workaround through the proof by direct calculus using $F_A$ and $F_B$ (absolutely continuous with densities $f_A$ and $f_B$), one would reach integrals of the general form :
$$\int_{]a,+\infty[} f_A(u) (1-F_A(u))^i \beta(F_B(u),j,k) \,du,
$$ 
which are tractable when $F_A = F_B$, but do not appear to be obviously so when $F_A\neq F_B$.
\end{rmk}

\section{Exact and asymptotic distributions of the traded volume}
\label{sec:TradedVolumeDistribution}

The Poisson assumption for the bid and ask processes immediately yields the general form of the unconditional distribution of the traded volume. With a little rewriting one straightforwardly gets the general result.

\begin{proposition} 
\label{prop:UnconditionalTradedVolume}
In the general call auction model with orders submission rate $\lambda$, market imbalance $\alpha$ and auction length $T$, the distribution of the traded volume is for any $k\in\mathbb N$:
\begin{equation}
	\mathbf P(V=k) 
	= e^{-\lambda T} \frac{\left(\alpha(1-\alpha)\lambda^2 T^2\right)^k}{\left(k!\right)^2}
	\,\,\sum_{i=0}^{+\infty}\sum_{j=0}^{+\infty} \frac{(\lambda T)^{i+j}}{\binom{i+j+2k}{i+k}}
	\frac{\alpha^i}{i!} \frac{(1-\alpha)^j}{j!}
\end{equation}
Note that this result does not depend on the orders' price distribution $F$.
\end{proposition}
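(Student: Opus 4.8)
The plan is to condition on the total numbers of submitted ask and bid orders and then invoke Lemma \ref{lem:ConditionalDistributions}. Since ask orders arrive as a Poisson process of rate $\lambda_A=\alpha\lambda$ on $[0,T]$, the count $A(\infty)$ is Poisson with parameter $\alpha\lambda T$; likewise $B(-\infty)$ is Poisson with parameter $(1-\alpha)\lambda T$, and the two are independent. The law of total probability then gives $\mathbf P(V=k)=\sum_{m\geq 0}\sum_{n\geq 0}\mathbf P(V=k\mid A(\infty)=m,B(-\infty)=n)\,\mathbf P(A(\infty)=m)\,\mathbf P(B(-\infty)=n)$, in which the product of the two Poisson normalizations collapses to $e^{-\alpha\lambda T}e^{-(1-\alpha)\lambda T}=e^{-\lambda T}$.

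Before summing I would check that, although Lemma \ref{lem:ConditionalDistributions} is stated for $(m,n)\in(\mathbb N^*)^2$, its hypergeometric expression $\binom{m}{k}\binom{n}{k}/\binom{m+n}{n}$ extends correctly to the axes: for $m=0$ or $n=0$ it equals $\mathbf{1}_{\{k=0\}}$, as it should since no trade can occur without at least one order on each side. Hence the conditional law of $V$ can be used over the whole lattice $\mathbb N^2$, with only the terms $m\geq k$, $n\geq k$ contributing. Because every summand is nonnegative, Tonelli's theorem justifies all the rearrangements below, so no convergence issue arises. Substituting $m=i+k$, $n=j+k$ with $i,j\geq 0$, the decisive simplification is $\binom{i+k}{k}/(i+k)!=1/(i!\,k!)$ and its analogue in $j$, which cancels the factorials from the Poisson weights against the binomials in the hypergeometric numerator. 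Pulling the factors $(\alpha\lambda T)^k\,((1-\alpha)\lambda T)^k=(\alpha(1-\alpha)\lambda^2T^2)^k$ and $1/(k!)^2$ out of the double sum and using $\binom{i+j+2k}{j+k}=\binom{i+j+2k}{i+k}$ yields exactly the stated formula; independence from $F$ is inherited directly from Lemma \ref{lem:ConditionalDistributions}.

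There is no genuine obstacle here: the proposition is an immediate corollary of Lemma \ref{lem:ConditionalDistributions} once the Poisson structure of $A(\infty)$ and $B(-\infty)$ is used. The only point deserving a line of care is verifying that the hypergeometric weight reduces to $\mathbf{1}_{\{k=0\}}$ on the boundary $m=0$ or $n=0$, so that the law of total probability may be applied over the full index set; everything after that is elementary manipulation of binomial coefficients and reindexing.
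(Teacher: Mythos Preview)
Your proposal is correct and follows exactly the approach the paper indicates: the paper itself does not give a detailed proof, merely noting that ``the Poisson assumption for the bid and ask processes immediately yields the general form'' and that ``with a little rewriting one straightforwardly gets the general result.'' Your conditioning on $(A(\infty),B(-\infty))$, invocation of Lemma~\ref{lem:ConditionalDistributions}, reindexing $m=i+k$, $n=j+k$, and the simplification $\binom{i+k}{k}/(i+k)!=1/(i!\,k!)$ are precisely that rewriting, and your check that the hypergeometric weight extends correctly to the boundary $m=0$ or $n=0$ is a useful extra bit of care that the paper omits.
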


This form has symmetry properties (with respect to the processes of buy and sell orders, expressed through the market imbalance $\alpha\in[0,1]$), and it is therefore convenient for symbolic computation. However, any of the two sums can be nicely expressed using special functions, which can be of interest in terms of computational implementation. For example,
\begin{equation}
	\mathbf P(V=k) 
	= e^{-\lambda T} \left(\alpha(1-\alpha)\lambda^2 T^2\right)^k
	\,\, \sum_{i=0}^{+\infty} \binom{k+i}{k} \frac{\left(\alpha \lambda T\right)^i}{(i+2k)!}
	\,\, {}_1F_1\left(k+1, i+2k+1,(1-\alpha)\lambda T\right),
\end{equation}
where ${}_1F_1$ is the confluent hypergeometric function \citep[see e.g.][]{Seaborn1991}. 

The result of proposition \ref{prop:UnconditionalTradedVolume} generalizes to a general market imbalance the distribution found in \citet{Mendelson1982} in the symmetric case with no market imbalance. By setting $\alpha=\frac{1}{2}$ in the result of proposition \ref{prop:UnconditionalTradedVolume}, and using some combinatorial rewriting including variants of the Vandermonde identity \citep{Gould1956}, we obtain :
\begin{equation}
	\mathbf P(V=k) = e^{-\frac{\lambda T}{2}} \left(\frac{\lambda T}{2}\right)^{2k} \frac{1}{(2k)!} \left(1 + \frac{\lambda T}{2(2k+1)}\right)
\end{equation}
which was found by \citet[][equation 3.3]{Mendelson1982}.
Furthermore, by setting $k=0$, proposition \ref{prop:UnconditionalTradedVolume} gives the probability that no trade will occur at time $T$ during the clearing mechanism, which is an important quantity in very illiquid markets, as well as the influence of the market imbalance $\alpha$ on this quantity. Straightforward identities give for $k=0$ and $\alpha\neq\frac{1}{2}$:
\begin{equation}
	\mathbf P(V=0) = e^{-\alpha\lambda T} + \frac{\alpha}{1-2\alpha}\left( e^{-\alpha\lambda T} - e^{-(1-\alpha)\lambda T}\right).
\end{equation}
Once again, in the symmetric case $\alpha=\frac{1}{2}$, one retrieves by computing the limit of the above expression:
\begin{equation}
\mathbf P(V=0) = e^{-\frac{\lambda T}{2}} \left(1 + \frac{\lambda T}{2}\right),
\end{equation}
which was obtained by \citet[unnumbered equation p.1512]{Mendelson1982}.

Figure \ref{fig:TradedVolumeDistribution} plots several examples of the traded volume distribution in the case of an illiquid market ($\lambda=10$) and in the case of a more liquid market ($\lambda=100$).
In the example cases of the illiquid market, the probability that no trade happens varies from $4\%$ to $33\%$ depending on the market imbalance $\alpha$. In the example cases of the liquid market, this probability is less than $10^{-5}$ even when the market is not balanced.
\begin{figure}
\centering
\begin{tabular}{cc}
	\includegraphics[scale=0.5]{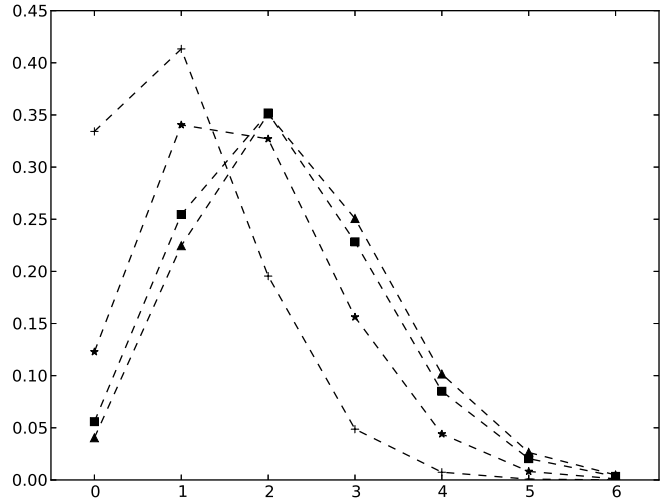}
	\\
	\includegraphics[scale=0.5]{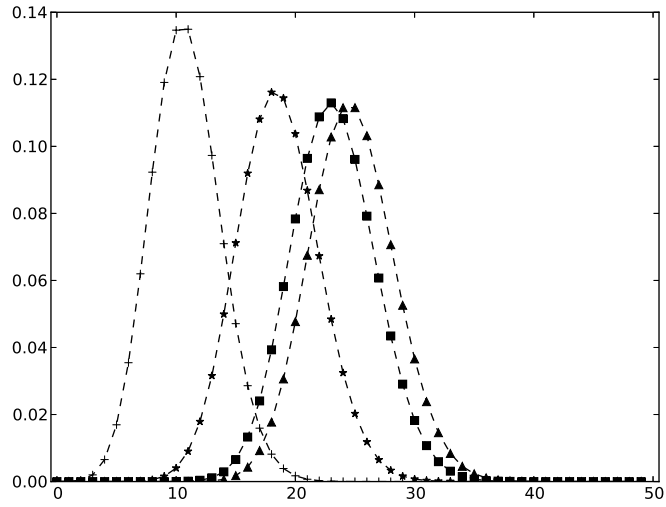}	
\end{tabular}
\caption{Traded volume distribution (volume on the $x$-axis, mass probabilities on the $y$-axis) for several values of market imbalance : $\alpha=0.125$ (plus), $\alpha=0.25$ (star), $\alpha=0.375$ (square), $\alpha=0.5$ (triangle) and several types of market liquidity: $\lambda=10$ (top) and $\lambda=100$ (bottom).}
\label{fig:TradedVolumeDistribution}
\end{figure}
As expected, the distribution is shifted to the left as the market imbalance increases, i.e. $\left|\alpha-\frac{1}{2}\right|$ increases.

As hinted by figure \ref{fig:TradedVolumeDistribution}, the traded volume is asymptotically normal as the market liquidity increases. The mean is $\lambda T\alpha(1-\alpha)$, which might have been guessed by taking the equivalent of the mean of the hypergeometric distribution of lemma \ref{lem:ConditionalDistributions} as $N\to+\infty$, but the variance of the traded volume, however, is not an intuitive function of the market imbalance. The rigorous result is the following.
\begin{proposition}
\label{prop:AsymptoticVolume}
In the general call auction model with orders submission rate $\lambda$, market imbalance $\alpha$ and auction length $T$, in the very liquid case where $\lambda T\to+\infty$,
the distribution of the traded volume is asymptotically normal with mean $\lambda T \alpha (1-\alpha)$ and standard deviation $\sqrt{\lambda T \alpha(1-\alpha)(1-2\alpha(1-\alpha))}$.
Note that this result does not depend on the orders' price distribution $F$.
\end{proposition}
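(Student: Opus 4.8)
The plan is to derive the limit from the conditional hypergeometric law of Lemma~\ref{lem:ConditionalDistributions} combined with the Gaussian fluctuations of the order counts $A(\infty)\sim\mathcal P(\alpha\lambda T)$ and $B(-\infty)\sim\mathcal P((1-\alpha)\lambda T)$, which here play the role of asymptotically normal random indices. Throughout assume $\alpha\in(0,1)$ (if $\alpha\in\{0,1\}$ then $V\equiv 0$ and there is nothing to prove). Write $N\triangleq\lambda T$, $A\triangleq A(\infty)$, $B\triangleq B(-\infty)$; on $\{A\geq1,B\geq1\}$ (an event of probability $1-o(1)$) Lemma~\ref{lem:ConditionalDistributions} gives $\mathbf E[V\mid A,B]=\frac{AB}{A+B}$ and $\mathrm{Var}(V\mid A,B)=\frac{A^2B^2}{(A+B)^2(A+B-1)}$. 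Split
\begin{equation}
V-\alpha(1-\alpha)N=\underbrace{\Big(V-\tfrac{AB}{A+B}\Big)}_{=:\,D_1}+\underbrace{\Big(\tfrac{AB}{A+B}-\alpha(1-\alpha)N\Big)}_{=:\,D_2},
\end{equation}
so it suffices to find the joint weak limit of $(D_1/\sqrt N,D_2/\sqrt N)$.

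For $D_2$, I would apply the delta method to $g(a,b)=\tfrac{ab}{a+b}$ at the point $(\alpha N,(1-\alpha)N)$. Since $A$ and $B$ are independent Poisson variables, $\big((A-\alpha N)/\sqrt N,(B-(1-\alpha)N)/\sqrt N\big)\overset{d}{\to}\mathcal N(0,\mathrm{diag}(\alpha,1-\alpha))$; the first partials $\partial_a g=\tfrac{b^2}{(a+b)^2}$ and $\partial_b g=\tfrac{a^2}{(a+b)^2}$ take the values $(1-\alpha)^2$ and $\alpha^2$ at the centring point, while the second partials are $O(1/N)$ uniformly on a neighbourhood of scale $\sqrt N$, so the Taylor remainder is $O_P(1)=o_P(\sqrt N)$. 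This yields $D_2/\sqrt N\overset{d}{\to}\mathcal N\big(0,\,\alpha(1-\alpha)^4+(1-\alpha)\alpha^4\big)$.

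For $D_1$, the ingredient is the classical central limit theorem for the hypergeometric law: along any sequence $(m,n)$ with $m/(m+n)$ bounded away from $0$ and $1$, $\big(V-\tfrac{mn}{m+n}\big)\big/\sqrt{\tfrac{m^2n^2}{(m+n)^3}}\overset{d}{\to}\mathcal N(0,1)$. Here, however, the ``sample sizes'' $m=A$ and $n=B$ are themselves random and diverging, so I would invoke an Anscombe-type theorem on the weak limit of a family of distributions evaluated at independent random indices tending to infinity: writing $\psi_N(m,n)\triangleq\mathbf E\big[h(D_1/\sqrt N)\mid A=m,B=n\big]$ for bounded continuous $h$, the hypergeometric CLT together with $\sqrt{A^2B^2/(A+B)^3}=\alpha(1-\alpha)\sqrt N(1+o_P(1))$ on $\{A/N\to\alpha,\ B/N\to1-\alpha\}$ gives $\psi_N(m,n)\to\mathbf E[h(Z_1)]$ with $Z_1\sim\mathcal N(0,\alpha^2(1-\alpha)^2)$ whenever $m/N\to\alpha$, $n/N\to1-\alpha$; since $(A/N,B/N)\overset{P}{\to}(\alpha,1-\alpha)$, bounded convergence gives $\psi_N(A,B)\overset{P}{\to}\mathbf E[h(Z_1)]$. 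The payoff is asymptotic independence of $D_1/\sqrt N$ from the $\sigma$-field of $(A,B)$: for bounded continuous $h_1,h_2$, $\mathbf E\big[h_1(D_1/\sqrt N)h_2(D_2/\sqrt N)\big]=\mathbf E\big[h_2(D_2/\sqrt N)\,\psi_N(A,B)\big]\to\mathbf E[h_1(Z_1)]\,\mathbf E[h_2(Z_2)]$ by Slutsky, where $Z_2$ denotes the weak limit of $D_2/\sqrt N$. Making this uniformity in the hypergeometric CLT precise --- i.e.\ the passage from fixed to random, diverging sample sizes --- is the main obstacle; everything else is routine.

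Combining the two parts, $\frac{V-\alpha(1-\alpha)N}{\sqrt N}\overset{d}{\to}\mathcal N(0,\sigma^2)$ with $\sigma^2=\alpha^2(1-\alpha)^2+\alpha(1-\alpha)^4+(1-\alpha)\alpha^4=\alpha(1-\alpha)\big(\alpha(1-\alpha)+\alpha^3+(1-\alpha)^3\big)$; using $\alpha^3+(1-\alpha)^3=1-3\alpha(1-\alpha)$ this collapses to $\sigma^2=\alpha(1-\alpha)\big(1-2\alpha(1-\alpha)\big)$. Undoing the $\sqrt N=\sqrt{\lambda T}$ scaling gives asymptotic normality of $V$ with mean $\lambda T\alpha(1-\alpha)$ and standard deviation $\sqrt{\lambda T\,\alpha(1-\alpha)(1-2\alpha(1-\alpha))}$, as claimed; independence from $F$ is immediate since neither $D_1$ nor $D_2$ involves the price distribution. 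An alternative, entirely analytic route is a Laplace/saddle-point asymptotic analysis of the double series in Proposition~\ref{prop:UnconditionalTradedVolume}, but it is longer and obscures the role of the imbalance $\alpha$.
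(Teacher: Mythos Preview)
Your argument is correct and reaches the same limit, but it is organized quite differently from the paper's proof. The paper does \emph{not} split $V-\alpha(1-\alpha)\lambda T$ into a conditional-fluctuation piece and a fluctuation-of-the-conditional-mean piece. Instead it parametrizes the order counts as a total count $N$ (Poisson$(\lambda T)$) together with a binomial split $n_N\sim\mathcal B(N,1-\alpha)$, and applies \citet{Korolev1993}, Theorem~4, \emph{twice}: first with $N$ held fixed and $n_N$ random (which produces an intermediate Gaussian with variance $N\alpha(1-\alpha)(1-3\alpha(1-\alpha))$), and then with $N$ made Poisson (which turns the $1-3\alpha(1-\alpha)$ into the final $1-2\alpha(1-\alpha)$). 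Your route treats $A$ and $B$ as independent Poissons from the outset, handles $D_2$ by the delta method on $g(a,b)=ab/(a+b)$, and recovers the total variance directly as the sum $\alpha^2(1-\alpha)^2+\alpha(1-\alpha)^4+\alpha^4(1-\alpha)$---which is exactly the asymptotic law-of-total-variance decomposition. This is arguably more transparent about \emph{why} the imbalance enters the variance the way it does, and it avoids the somewhat artificial two-layer randomization. What the paper's approach buys is that the passage from deterministic to random indices---the step you flag as ``the main obstacle''---is dispatched by citing an off-the-shelf transfer theorem; your conditioning argument for the asymptotic independence of $D_1/\sqrt N$ and $D_2/\sqrt N$ is sound, but the uniformity you need in the hypergeometric CLT is precisely what Korolev's result (or a Berry--Esseen bound for the hypergeometric law) would supply if you wanted to make it fully rigorous.
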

\begin{proof}
The problem is to determine the weak limit of the hypergeometric distribution of lemma \ref{lem:ConditionalDistributions} when the number of ask orders is Poisson with parameter $\alpha\lambda T$ and the number of bid orders is Poisson with parameter $(1-\alpha)\lambda T$, or, equivalently, when the total number of orders $N$ is Poisson with parameter $\lambda T$ and the conditional number of bid orders $n_N$ is binomial with parameters $N$ and $(1-\alpha)$.
The proof is based on two successive applications of limit theorems for sequences of random variables with random indices, the first one on $n_N$ when $N$ is assumed deterministic, and the second one on $N$ finally assumed random.

Let $V_{N,n}$ a random variable with hypergeometric distribution $\mathbf P(V_{N,n}=k)=\frac{\binom{n}{k}\binom{N-n}{n-k}}{\binom{N}{N-n}}$, $0\leq k\leq n\leq N$.
Assume $(n_N)_{N}$ is a deterministic sequence of natural integers such that $\frac{n_N}{N}\underset{N\to+\infty}\longrightarrow 1-\alpha$. Then $V_{N,n_N}$ is weakly convergent to a Gaussian distribution \citep[see e.g.][section VII.7, Problem 10]{Feller1968}:
\begin{equation}
	\frac{V_{N,n_N}-N\alpha(1-\alpha)}{\sqrt{N}\alpha(1-\alpha)} 
	\overset{d}{\underset{N\to+\infty}\longrightarrow} \mathcal N(0,1) .
\end{equation}
Assume now that $n_N$ is a random variable with binomial distribution $\mathcal B(N,1-\alpha)$, and let
$V_N=V_{N,n_N}$. Let also $l_N(s)$ denote the infimum of the $s$-quantile of $n_N$.
Standard convergence of the binomial distribution is written:
\begin{equation}
	\frac{n_N-N(1-\alpha)}{\sqrt{N\alpha(1-\alpha)}}
	\overset{d}{\underset{N\to+\infty}\longrightarrow} \mathcal N(0,1),
\end{equation}
hence we may write: $l_N(s) = N(1-\alpha)+ \Phi^{-1}(s)\sqrt{N\alpha(1-\alpha)}+o(\sqrt{N})$.
Let us set $1-\alpha_N(s) \triangleq \frac{l_N(s)}{N}$, $a_N(s) \triangleq N \alpha_N(s) ( 1-\alpha_N(s) )$ and $b_N(s) \triangleq \sqrt{N} \alpha_N(s)( 1-\alpha_N(s) )$.
Straightforward computations yield:
\begin{equation}
	a_N(s) = N\alpha(1-\alpha) + (2\alpha-1)\sqrt{N\alpha(1-\alpha)} \Phi^{-1}(s) + o(\sqrt{N}),	
\end{equation}
and
\begin{equation}
	b_N(s) = \sqrt{N}\alpha(1-\alpha) + o(\sqrt{N}).
\end{equation}
Therefore, $\frac{a_N(s)-N\alpha(1-\alpha)}{\sqrt{N}\alpha(1-\alpha)} \underset{N\to+\infty}\longrightarrow \Phi^{-1}(s) \frac{2\alpha-1}{\sqrt{\alpha(1-\alpha)}}$ and $\frac{b_N(s)}{\sqrt{N}\alpha(1-\alpha)} \underset{N\to+\infty}\longrightarrow 1$. Hence, applying Theorem 4 of \citet{Korolev1993} on the weak convergence on sequences of random variables with random indices, we obtain that $\frac{V_N-N\alpha(1-\alpha)}{\sqrt{N}\alpha(1-\alpha)}$ converges weakly to the sum of two independant zero-mean normal distributions, one with standard deviation $1$ and one with standard deviation $\frac{\left|2\alpha-1\right|}{\sqrt{\alpha(1-\alpha)}}$. By summing the variances and rescaling, we obtain:
\begin{equation}
	\label{eq:ConvergenceVFixedNRandomn}
	\frac{V_N-N\alpha(1-\alpha)}{\sqrt{N\alpha(1-\alpha)(1-3\alpha(1-\alpha))}} 
	\overset{d}{\underset{N\to+\infty}\longrightarrow} \mathcal N(0,1).
\end{equation}
Finally, assume now that $N_k$ is a random variable with Poisson distribution with parameter $k=\lfloor\lambda T\rfloor$ and let $l_k(s)$ denote the infimum of the $s$-quantile of $N_k$.
Standard convergence of the Poisson distribution is written:
\begin{equation}
	\frac{N_k-k}{\sqrt{k}}
	\overset{d}{\underset{k\to+\infty}\longrightarrow} \mathcal N(0,1),
\end{equation}
hence we may write: $l_k(s) = k + \Phi^{-1}(s)\sqrt{k} + o(\sqrt{k})$.
Thanks to equation \eqref{eq:ConvergenceVFixedNRandomn} we have:
\begin{equation}
	\frac{V_{l_k(s)}-l_k(s)\alpha(1-\alpha)}{\sqrt{l_k(s)\alpha(1-\alpha)(1-3\alpha(1-\alpha))}} 
	\overset{d}{\underset{k\to+\infty}\longrightarrow} \mathcal N(0,1) .
\end{equation}
Let 
\begin{equation}
	a_k(s) \triangleq l_k(s)\alpha(1-\alpha) 
	= k\alpha(1-\alpha) + \sqrt{k}\alpha(1-\alpha) \Phi^{-1}(s) + o(\sqrt{k}),	
\end{equation}
and
\begin{equation}
	b_k(s) \triangleq \sqrt{l_k(s)\alpha(1-\alpha)(1-3\alpha(1-\alpha))}
	= \sqrt{k\alpha(1-\alpha)(1-3\alpha(1-\alpha))} + o(\sqrt{k}).
\end{equation}
This yields 
\begin{equation}
	\frac{a_k(s)-k\alpha(1-\alpha)}{\sqrt{k\alpha(1-\alpha)(1-3\alpha(1-\alpha))}} \underset{k\to+\infty}\longrightarrow \Phi^{-1}(s) \sqrt{\frac{\alpha(1-\alpha)}{1-3\alpha(1-\alpha))}}
\end{equation}
and
\begin{equation}
\frac{b_k(s)}{\sqrt{k\alpha(1-\alpha)(1-3\alpha(1-\alpha))}} \underset{k\to+\infty}\longrightarrow 1.
\end{equation}
Hence, applying once again Theorem 4 of \citet{Korolev1993}, we finally obtain that:
\begin{equation}
	\frac{V_k-k\alpha(1-\alpha)}{\sqrt{k\alpha(1-\alpha)(1-3\alpha(1-\alpha))}} 
	\overset{d}{\underset{k\to+\infty}\longrightarrow}
	\mathcal N\left(0,\sqrt{\frac{1-2\alpha(1-\alpha)}{1-3\alpha(1-\alpha)}}\right),
\end{equation}
which, with one last rescaling, completes the proof.
\end{proof}

The first moment of the distribution (but not the distribution) was approximately derived by \citet[equation 3.1]{Mendelson1982} by resorting to asymptotic results from renewal theory. Using our notations, it was found there that if the market is liquid enough (i.e. $\lambda \gg 1$), then the expected traded volume is approximately the inverse of the expectation of the sum of two independent exponential variables with parameters $\alpha\lambda T$ and $(1-\alpha)\lambda T$, which gives $\alpha(1-\alpha)\lambda T$ and is indeed the average traded volume obtained in proposition \ref{prop:AsymptoticVolume} when $\lambda$ is large.

\section{Exact and asymptotic distributions of the lower and upper prices}
\label{sec:PricesDistributions}

We now extend lemma \ref{lem:ConditionalDistributions} to obtain the unconditional distributions of the lower and upper bound of the possible clearing prices interval.
Following lemma \ref{lem:ConditionalDistributions}, the density of the lower bound of the interval of potential clearing prices conditionally to the set $\{A(\infty)=m,B(-\infty)=n\}$ is:
\begin{equation}
	f_{L \vert A(\infty)=m,B(-\infty)=n} (x) = \frac{(n+m)!}{(n-1)!m!} \left[F(x)\right]^{n-1} \left[1-F(x)\right]^{m} f(x),
\end{equation}
which gives the following result.

\begin{proposition}
In the general call auction model with orders submission rate $\lambda$, market imbalance $\alpha$, auction length $T$ and price distribution $F$ (absolutely continuous with density $f$), the distribution of the lower bound $L$ of the possible clearing prices admits the probability density function $f_L$ defined for any $x\in\mathbb R$ as:
\begin{align}
	f_L(x) = & \left(e^{\alpha\lambda T}-1\right)^{-1} \left(e^{(1-\alpha)\lambda T}-1\right)^{-1} f(x)
	\nonumber \\
	& \sum_{n=1}^{+\infty} \frac{\left[(1-\alpha)\lambda T\right]^n}{n!} \frac{\left[F(x)\right]^{n-1}}{(n-1)!}
	\sum_{m=1}^{+\infty} (n+m)! \frac{\left[\alpha \lambda T\right]^m}{m!} \frac{\left[1-F(x)\right]^{m}}{m!}	
\end{align}
Symmetrically, the probability density function $f_U$ of the uper bound of the clearing prices range is obtained by substituting $\alpha$ with $1-\alpha$ and $F(x)$ with $1-F(x)$ in the above formula.
\end{proposition}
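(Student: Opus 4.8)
The plan is to obtain $f_L$ by unconditioning the conditional density $f_{L\mid A(\infty)=m,B(-\infty)=n}$ recalled just above the statement over the joint law of $(A(\infty),B(-\infty))$. First I would record that, since ask (resp. bid) orders arrive as a Poisson process with rate $\lambda_A=\alpha\lambda$ (resp. $\lambda_B=(1-\alpha)\lambda$) on $[0,T]$, the total counts $A(\infty)$ and $B(-\infty)$ are independent Poisson variables with parameters $\alpha\lambda T$ and $(1-\alpha)\lambda T$. Because the lower clearing price $L$ is only defined on the event $\{A(\infty)\ge 1\}\cap\{B(-\infty)\ge 1\}$, the relevant law is the conditional one, which factorizes by independence into $\mathbf P(A(\infty)=m\mid A(\infty)\ge1)=\frac{(\alpha\lambda T)^m/m!}{e^{\alpha\lambda T}-1}$ and $\mathbf P(B(-\infty)=n\mid B(-\infty)\ge1)=\frac{((1-\alpha)\lambda T)^n/n!}{e^{(1-\alpha)\lambda T}-1}$ for $m,n\ge1$; this is where the prefactor $(e^{\alpha\lambda T}-1)^{-1}(e^{(1-\alpha)\lambda T}-1)^{-1}$ comes from rather than the naive $e^{-\lambda T}$.

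Next I would invoke the law of total probability in density form, $f_L(x)=\sum_{m\ge1}\sum_{n\ge1} f_{L\mid A(\infty)=m,B(-\infty)=n}(x)\,\mathbf P(A(\infty)=m\mid A(\infty)\ge1)\,\mathbf P(B(-\infty)=n\mid B(-\infty)\ge1)$. This identity is legitimate because, conditionally on $(m,n)$, $L$ is the $n$-th order statistic of an $F$-sample of size $n+m$ and hence absolutely continuous, and a countable convex combination of absolutely continuous laws is absolutely continuous with density equal to the corresponding combination of densities, the interchange of sum and integral being justified by Tonelli since all terms are non-negative. Substituting the explicit conditional density $\frac{(n+m)!}{(n-1)!m!}[F(x)]^{n-1}[1-F(x)]^m f(x)$ and grouping the $F(x)$ and $1-F(x)$ powers with the matching Poisson weights produces exactly the stated double series; I would also note that the inner series in $m$ converges, e.g. by the ratio test, since $(n+m+1)\alpha\lambda T/(m+1)^2\to0$.

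For $f_U$ I would run the identical computation starting from part 3 of Lemma \ref{lem:ConditionalDistributions}: conditionally on $\{A(\infty)=m,B(-\infty)=n\}$, $U$ is the $(n+1)$-th order statistic of an $F$-sample of size $n+m$, with density $\frac{(n+m)!}{n!(m-1)!}[F(x)]^n[1-F(x)]^{m-1}f(x)$. Unconditioning against the same weights and relabelling the summation indices, the resulting expression is seen to be obtained from the formula for $f_L$ by the substitution $\alpha\mapsto1-\alpha$, $F(x)\mapsto1-F(x)$. Equivalently, this symmetry can be argued a priori via the price reflection $x\mapsto-x$ combined with the bid/ask relabelling that swaps $\lambda_A$ and $\lambda_B$ (hence $\alpha$ and $1-\alpha$) and replaces $F$ by $x\mapsto 1-F(-x)$, under which $U$ is mapped to the negative of the corresponding lower clearing price.

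There is no serious obstacle here: the argument is essentially a one-step unconditioning. The only points requiring care are the normalisation constants coming from conditioning on having at least one order on each side, the bookkeeping of factorials when matching the order-statistic density to the claimed series, and the brief justification that the (non-negative) summation commutes with the underlying integration and that the series converge.
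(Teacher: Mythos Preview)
Your proposal is correct and follows essentially the same approach as the paper: the paper simply records the conditional density $f_{L\mid A(\infty)=m,B(-\infty)=n}(x)=\frac{(n+m)!}{(n-1)!m!}[F(x)]^{n-1}[1-F(x)]^m f(x)$ from Lemma~\ref{lem:ConditionalDistributions} and states that the proposition follows, implicitly by unconditioning over the independent truncated Poisson laws of $A(\infty)$ and $B(-\infty)$. Your write-up makes explicit the conditioning on $\{A(\infty)\ge1,B(-\infty)\ge1\}$ that produces the prefactor, the Tonelli justification, and the symmetry argument for $f_U$, all of which the paper leaves to the reader.
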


Here again, we have highlighted the symmetric result in the proposition, but carrying out some computations may lead to express these densities with a single series, such as:
\begin{align}
	f_L(x) = \frac{(1-\alpha)\lambda T f(x) e^{(1-\alpha)\lambda T F(x)}}{\left(e^{\alpha\lambda T}-1\right)\left(e^{(1-\alpha)\lambda T}-1\right)}
	& \Big[ -1 + e^{-(1-\alpha)\lambda T F(x)}
	\\ 
	& \times \sum_{n=0}^{+\infty} \frac{\left((1-\alpha)\lambda T F(x)\right)^n}{n!}
	{}_1F_1\left( n+2, 1, \alpha \lambda T (1-F(x)) \right)
	\Big]. \nonumber 
\end{align}

Figure \ref{fig:LowerUpperPricesDistribution} plots several examples of this distribution, showing the influence of the market imbalance $\alpha$ in the liquid and non-liquid market cases.
\begin{figure}
\centering
\begin{tabular}{cc}
	\includegraphics[scale=0.5]{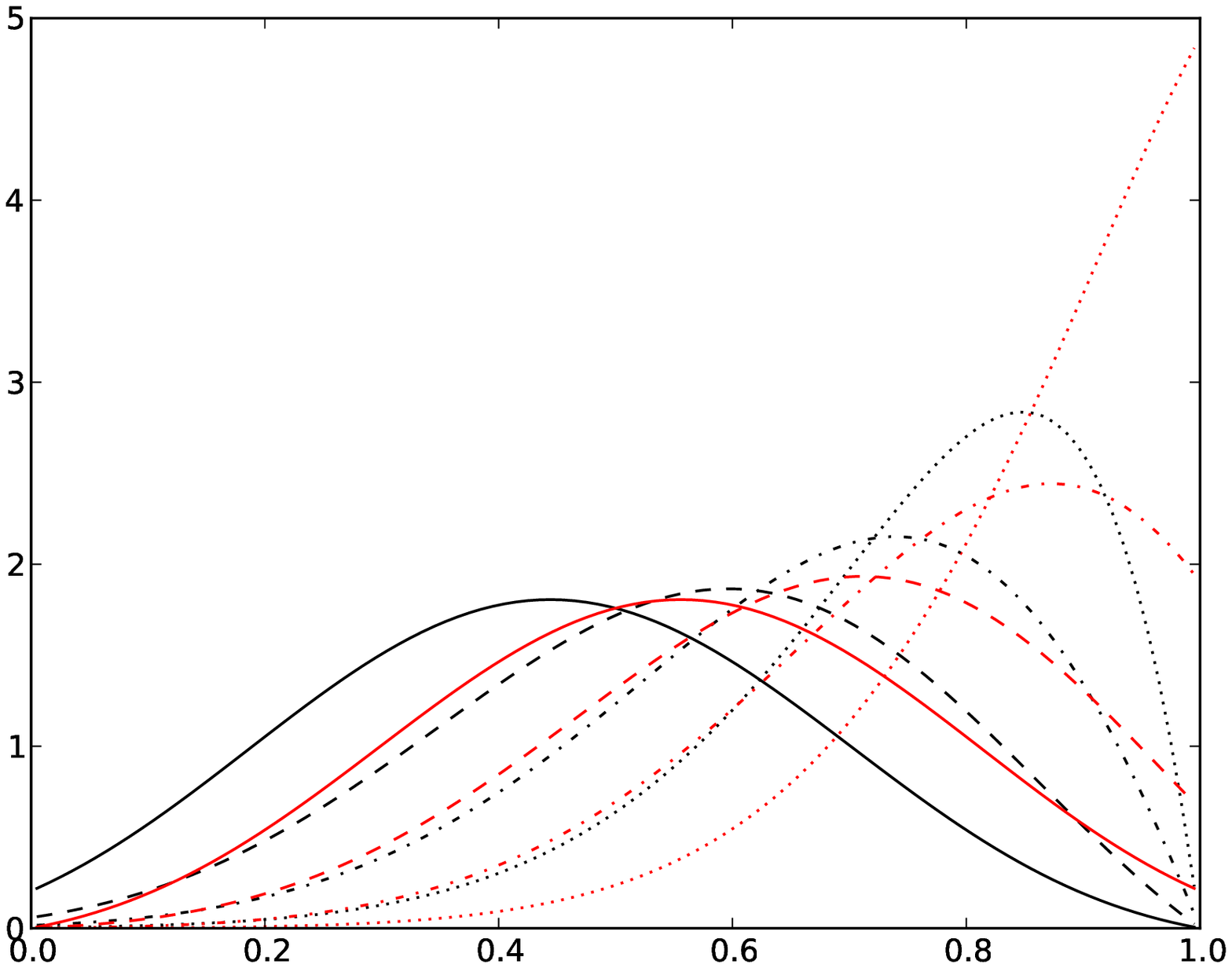}
	\\
	\includegraphics[scale=0.5]{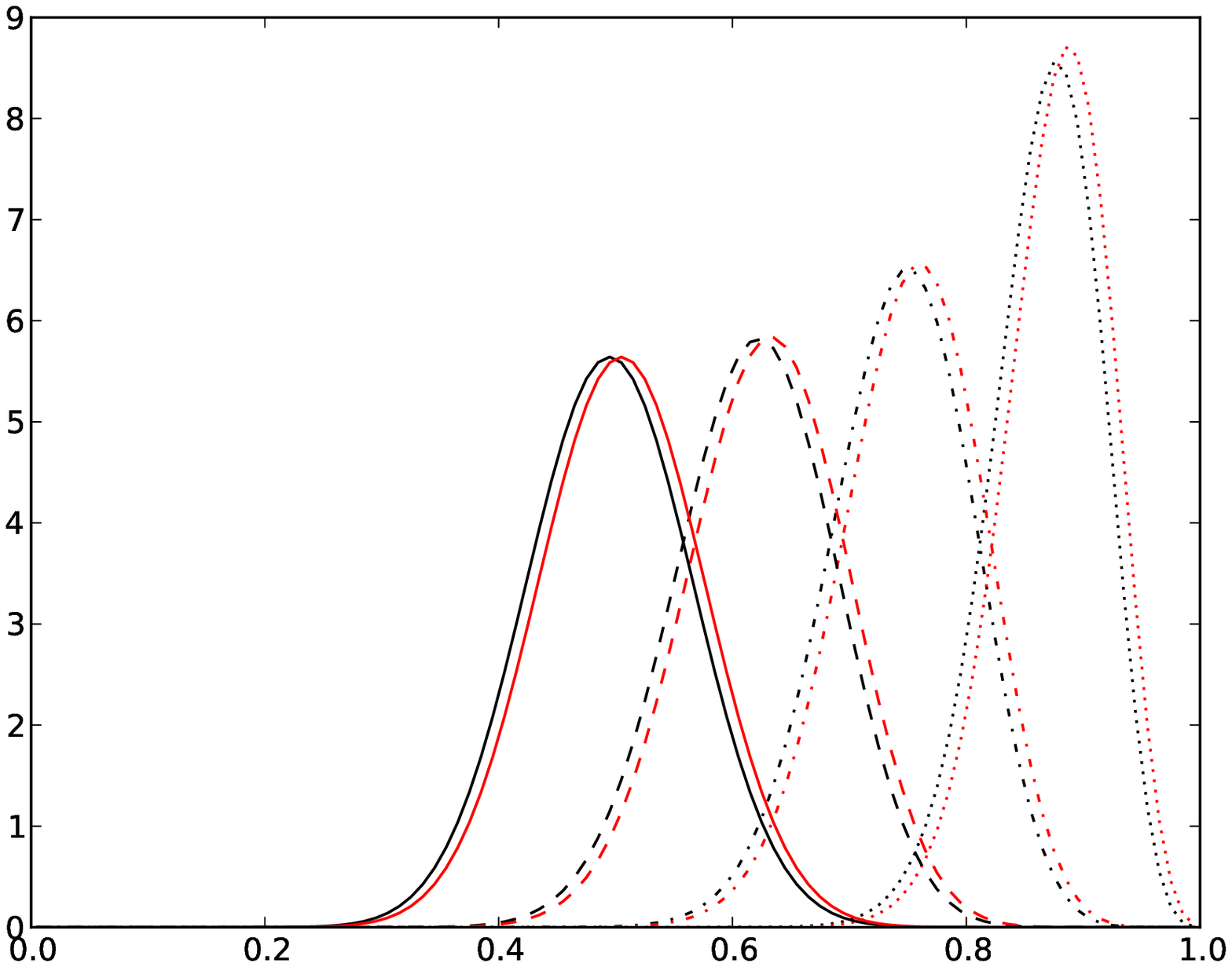}	
\end{tabular}
\caption{Lower (in black) and upper (in red) prices distributions (price on the $x$-axis, probability density functions on the $y$-axis) for several values of market imbalance $\alpha=0.125$ (dotted), $\alpha=0.25$ (dash-dotted), $\alpha=0.375$ (dashed), $\alpha=0.5$ (full) and several types of market liquidity: $\lambda=10$ (top) and $\lambda=100$ (bottom). Price distribution is uniform on $(0,1)$ in these examples.}
\label{fig:LowerUpperPricesDistribution}
\end{figure}
One guesses on these graphs that both the lower and upper prices distributions converges to the same normal distribution as the market liquidity increases.
This result is formally stated and proven in the next proposition.

\begin{proposition}
\label{prop:AsymptoticPrices}
In the general call auction model with orders submission rate $\lambda$, market imbalance $\alpha$, auction length $T$ and price distribution $F$ (absolutely continuous with density $f$),
in the very liquid case where $\lambda T\to+\infty$,
the distribution of the lower and upper bounds of the possible clearing prices is asymptotically normal
with mean $F^{-1}(1-\alpha)$
and standard deviation $\frac{1}{f(F^{-1}(1-\alpha))}\sqrt{\frac{2\alpha(1-\alpha)}{\lambda T}}$.
\end{proposition}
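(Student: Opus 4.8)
The plan is to reuse the machinery of the proof of Proposition~\ref{prop:AsymptoticVolume}: rewrite $L$ as an order statistic with a random index and then invoke Theorem~4 of \citet{Korolev1993} twice. By Lemma~\ref{lem:ConditionalDistributions}, conditionally on $\{A(\infty)=m,\,B(-\infty)=n\}$ the lower price $L$ is the $n$-th order statistic of an i.i.d.\ $F$-sample of size $n+m$. As in the volume proof I would reparametrise: the total number of orders $N\triangleq A(\infty)+B(-\infty)$ is Poisson with parameter $\lambda T$ and, conditionally on $N$, the number of bid orders $n_N$ is binomial $\mathcal B(N,1-\alpha)$; thus, with $N$ treated as deterministic, the lower price is the order statistic $X_{(n_N)}$ of a size-$N$ $F$-sample, which I denote $L_N$. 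Writing $q\triangleq F^{-1}(1-\alpha)$, I will use (as is implicit in the statement) that $f$ is continuous and positive at $q$. Since $\mathbf P(A(\infty)=0)=e^{-\alpha\lambda T}$ and $\mathbf P(B(-\infty)=0)=e^{-(1-\alpha)\lambda T}$ vanish exponentially as $\lambda T\to+\infty$, conditioning on $\{A(\infty)\geq1,\,B(-\infty)\geq1\}$ does not affect any of the weak limits below, and I would drop it.

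The analytic ingredient I would establish first is a \emph{refined} quantile central limit theorem: if $X_{(k_N)}$ is the $k_N$-th order statistic of an i.i.d.\ $F$-sample of size $N$ with $k_N=N(1-\alpha)+c\sqrt{N}+o(\sqrt{N})$ for a constant $c$, then
\begin{equation*}
	\sqrt{N}\,f(q)\,\bigl(X_{(k_N)}-q\bigr)\;\overset{d}{\underset{N\to+\infty}\longrightarrow}\;\mathcal N\!\left(c,\sqrt{\alpha(1-\alpha)}\right),
\end{equation*}
the mean shift coming from $F^{-1}\!\bigl(k_N/N\bigr)=q+c/\bigl(\sqrt{N}\,f(q)\bigr)+o\bigl(1/\sqrt{N}\bigr)$ and the variance from the classical asymptotic normality of sample quantiles; this can be obtained either from the Bahadur representation or directly from the Beta-function form of the law of $X_{(k_N)}$ together with a local central limit estimate, and it is exactly here that the regularity of $F$ at $q$ is used.

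With this in hand I would perform the two applications of Theorem~4 of \citet{Korolev1993}, paralleling the volume proof. First, with $N$ deterministic and $n_N\sim\mathcal B(N,1-\alpha)$, the $s$-quantile of $n_N$ is $l_N(s)=N(1-\alpha)+\Phi^{-1}(s)\sqrt{N\alpha(1-\alpha)}+o(\sqrt{N})$, so the refinement above applies with $c=\Phi^{-1}(s)\sqrt{\alpha(1-\alpha)}$; the conditional centering $a_N(s)$ and scaling $b_N(s)$ of $X_{(l_N(s))}$ then satisfy $\dfrac{a_N(s)-q}{1/(f(q)\sqrt{N})}\longrightarrow\Phi^{-1}(s)\sqrt{\alpha(1-\alpha)}$ and $\dfrac{b_N(s)}{1/(f(q)\sqrt{N})}\longrightarrow\sqrt{\alpha(1-\alpha)}$, and Theorem~4 then gives weak convergence of $\sqrt{N}\,f(q)\,(L_N-q)$ to the sum of two independent centred normals, each of standard deviation $\sqrt{\alpha(1-\alpha)}$, i.e.\ to $\mathcal N\!\bigl(0,\sqrt{2\alpha(1-\alpha)}\bigr)$. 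Second, with $N=N_k$ Poisson of parameter $k\triangleq\lfloor\lambda T\rfloor$ and $s$-quantile $l_k(s)=k+\Phi^{-1}(s)\sqrt{k}+o(\sqrt{k})$, the centering of $L$ is the constant $q$ for every $N$, so here $a_k(s)\equiv q$ while $b_k(s)=\sqrt{2\alpha(1-\alpha)}/(f(q)\sqrt{l_k(s)})=\bigl(1+o(1)\bigr)\sqrt{2\alpha(1-\alpha)}/(f(q)\sqrt{k})$; a second application of Theorem~4 yields $\dfrac{f(q)\sqrt{k}\,(L-q)}{\sqrt{2\alpha(1-\alpha)}}\overset{d}{\longrightarrow}\mathcal N(0,1)$, which, since $k=\lfloor\lambda T\rfloor$, is the claimed asymptotic normality of $L$ with mean $F^{-1}(1-\alpha)$ and standard deviation $\frac{1}{f(F^{-1}(1-\alpha))}\sqrt{\frac{2\alpha(1-\alpha)}{\lambda T}}$. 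For $U$, Lemma~\ref{lem:ConditionalDistributions} gives the $(n+1)$-th order statistic of the same sample; since $(n+1)/(n+m)$ has the same limit $1-\alpha$ and shifting the index by one is $o(\sqrt{N})$, every step above applies verbatim, so $U$ has the same asymptotic distribution (consistently with the $\alpha\mapsto1-\alpha$, $F\mapsto1-F$ symmetry of the exact density).

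The hard part is the refined quantile central limit theorem, and more precisely the observation that a $\Theta(\sqrt{N})$ fluctuation of the index induces a \emph{deterministic} $\Theta(1/\sqrt{N})$ displacement of the order statistic through $F^{-1}$: it is exactly this displacement, fed by the binomial fluctuation of the bid/ask split, that upgrades the naive variance $\alpha(1-\alpha)/\bigl(\lambda T\,f(q)^2\bigr)$ to $2\alpha(1-\alpha)/\bigl(\lambda T\,f(q)^2\bigr)$. Beyond that, the only points requiring care are pinning down the regularity hypotheses on $F$ at $F^{-1}(1-\alpha)$ and checking the convergence of the centering and scaling functions required by Korolev's theorem; the remaining manipulations are the same bookkeeping already carried out for Proposition~\ref{prop:AsymptoticVolume}.
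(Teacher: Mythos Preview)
Your proposal is correct and follows essentially the same route as the paper: reduce to central order statistics via Lemma~\ref{lem:ConditionalDistributions}, use the asymptotic normality of the $n_N$-th order statistic together with the first-order expansion of $F^{-1}$ around $1-\alpha$ to capture the shift induced by the binomial fluctuation of $n_N$, and apply \citet[Theorem~4]{Korolev1993} twice (first for $n_N\sim\mathcal B(N,1-\alpha)$, then for $N$ Poisson). The only cosmetic difference is that you package the order-statistic convergence and the $F^{-1}$ expansion into a single ``refined quantile CLT'' with an explicit mean shift $c$, whereas the paper states them separately; your additional remarks on the vanishing of the conditioning on $\{A(\infty)\geq1,\,B(-\infty)\geq1\}$ and on the regularity of $f$ at $F^{-1}(1-\alpha)$ are welcome clarifications that the paper leaves implicit.
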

\begin{proof}
Lemma \ref{lem:ConditionalDistributions} proves that conditionally on counting a total of $N$ orders including $n$ bid orders, the lower price $L_{N,n}$ is distributed as the $n$-th statistics of a random sample with size $N$ distributed according to $F$.
Assume temporarily that $N$ is fixed and $(n_N)_N$ is an sequence of integers such that $\frac{n_N}{N}\underset{N\to+\infty}\longrightarrow 1-\alpha$. Then the weak convergence of central order statistics \citep[see e.g.][Theorem 10.3]{David2003} states that:
\begin{equation}
	\label{eq:WeakConvergencenthOrderStatistics}
	\sqrt{N} \left( L_{N,n_N}-F^{-1}(1-\alpha) \right)
	\overset{d}{\underset{N\to+\infty}\longrightarrow}
	\mathcal N\left(0, \frac{\sqrt{\alpha(1-\alpha)}}{f(F^{-1}(1-\alpha))} \right).
\end{equation}
We now proceed by mimicking the proof of proposition \ref{prop:AsymptoticVolume}, omitting details for the sake of brevity.
Assume now that $n_N$ is distributed according to a binomial distribution $\mathcal B(N,1-\alpha)$. If $l_N(s)$ is the infimum of the $s$-quantile of this distribution and $1-\alpha_N(s)\triangleq\frac{l_N(s)}{N}$, then by equation \eqref{eq:WeakConvergencenthOrderStatistics}:
\begin{equation}
	\frac{L_{N,l_N(s)}-F^{-1}(1-\alpha)}{\frac{1}{f(F^{-1}(1-\alpha)}\sqrt{\frac{\alpha(1-\alpha)}{N}}}
	\overset{d}{\underset{N\to+\infty}\longrightarrow}
	\mathcal N\left(0,1\right),
\end{equation}
and the weak convergence of the binomial distribution leads to a simple expansion for the inverse of the absolutely continuous distribution $F$, which is written:
\begin{equation}
	\label{eq:Fm1expansion}
	F^{-1}(1-\alpha_N(s)) = F^{-1}(1-\alpha) 
	+ \frac{1}{f(F^{-1}(1-\alpha))}\sqrt{\frac{\alpha(1-\alpha)}{N}} \Phi^{-1}(s)
	+ o\left( N^{-1/2}\right).
\end{equation}
At this point, following \citet[Theorem 4]{Korolev1993}, the second term of equation \eqref{eq:Fm1expansion} is the source of a new standard gaussian term in the convergence, and we obtain that: 
\begin{equation}
	\sqrt{N} \left( L_{N,n_N}-F^{-1}(1-\alpha) \right) 
	\overset{d}{\underset{N\to+\infty}\longrightarrow}
	\mathcal N\left(0, \sqrt{2} \frac{\sqrt{\alpha(1-\alpha)}}{f(F^{-1}(1-\alpha)}\right).
\end{equation}
Finally, lifting the deterministic assumption on $N$ (i.e. reassuming that $N=\lfloor\lambda T\rfloor$ is Poisson) and re-applying the limit theorem for random indices leads to the weak convergence of the result of the proposition.

Note that the proof for the convergence of the upper price distribution is identical, since conditionally on counting a total of $N$ orders including $n$ bid orders, the upper price $U_{N,n}$ is distributed as the $n+1$-th statistics of a random sample with size $N$ distributed according to $F$.
\end{proof}

Figure \ref{fig:PriceConvergence} illustrates the convergence of proposition \ref{prop:AsymptoticPrices} as the liquidity increases, using empirical simulations of the model.
\begin{figure}
\centering
\includegraphics[scale=0.5]{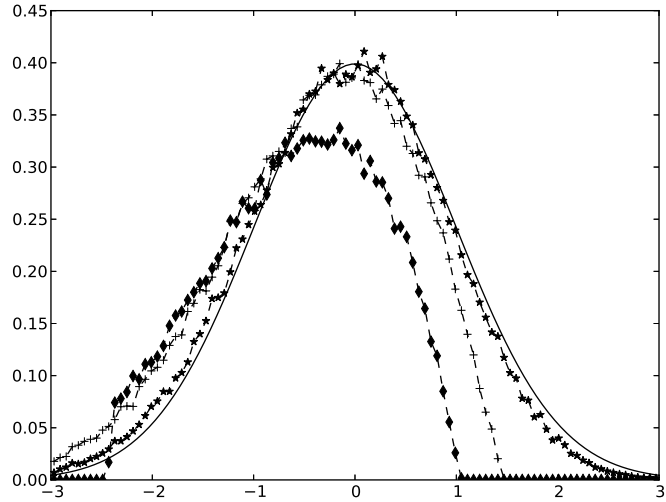}
\caption{Empirical densities of the lower prices centered and rescaled by the location and scale parameters of proposition \ref{prop:AsymptoticPrices}, as the liquidity increases : $\lambda=5$ (diamond), $\lambda=10$ (plus), $\lambda=100$ (stars). Full line is the standard gaussian. In these examples, empirical prices are obtained by simulation with $T=1$, $\alpha=0.3$ and prices uniformly distributed on $(0,1)$. The empirical distribution is computed with $10^5$ simulations.}
\label{fig:PriceConvergence}
\end{figure}

\section{Exact and asymptotic distributions of the range of clearing prices}
\label{sec:PriceRangeDistribution}

We now focus on the range of potential clearing prices, which is a proxy to the post-clearing spread, and is therefore, along with the traded volume, a fundamental variable from a liquidity point of view.

Thanks to the results of lemma \ref{lem:ConditionalDistributions}, we know that the clearing range $R$ is, conditionnally on the set $\left\{A(\infty)=m,B(-\infty)=n\right\}$, the range between the $n$-th and the $(n+1)$-th order statistics of an i.i.d. sample of size $n+m$ with distribution $F$. It conditional density $f_{R\vert A(\infty)=m,B(-\infty)=n}$ is therefore written :
\begin{equation}
	f_{R\vert A(\infty)=m,B(-\infty)=n}(\delta) = \frac{(n+m)!}{(n-1)!(m-1)!}\int_{\mathbb R} 
	\left[F(x)\right]^{n-1} f(x) f(x+\delta) \left[1-F(x+\delta)\right]^{m-1} \,dx.
\end{equation}
In the special case where $F$ is the uniform distribution on $(0,1)$, this conditional density can be explicitly computed as: 
\begin{equation}
	f_{R\vert A(\infty)=m,B(-\infty)=n}(\delta) = (n+m)(1-\delta)^{n+m-1},
\end{equation}
which in turn yields the result of proposition \ref{prop:UnconditionalPriceRange-UnifPrice}.

\begin{proposition}
\label{prop:UnconditionalPriceRange-UnifPrice}
In the call auction model with orders submission rate $\lambda$, market imbalance $\alpha$, auction length $T$ \emph{and uniform orders' price distribution on $(0,1)$}, the distribution of the clearing price range admits the probability density function $f_R$ defined for any $\delta\in(0,1)$ as:
\begin{equation}
	f_R(\delta) = \frac{\lambda T e^{-\lambda T\delta}}{\left(1-e^{-\alpha\lambda T}\right)\left(1-e^{-(1-\alpha)\lambda T}\right)} 
	\left[
	1 - (1-\alpha) e^{-\alpha \lambda T(1-\delta)} - \alpha e^{-(1-\alpha)\lambda T (1-\delta) }
	\right].
\end{equation}
\end{proposition}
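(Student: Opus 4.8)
The plan is to integrate the conditional density $f_{R\vert A(\infty)=m,B(-\infty)=n}(\delta)=(n+m)(1-\delta)^{n+m-1}$ — already recorded above for the uniform case, and which follows from Lemma~\ref{lem:ConditionalDistributions} together with the fact that a single spacing of $N=n+m$ i.i.d.\ uniforms on $(0,1)$ has a $\mathrm{Beta}(1,N)$ law, hence density $N(1-\delta)^{N-1}$ — against the joint law of $(A(\infty),B(-\infty))$ conditioned on the event $\{A(\infty)\ge 1,\ B(-\infty)\ge 1\}$ on which $L$, $U$ and hence $R$ are defined. Since $A(\infty)$ and $B(-\infty)$ are independent Poisson variables with parameters $a\triangleq\alpha\lambda T$ and $b\triangleq(1-\alpha)\lambda T$, the conditional joint law is the product of the two truncated Poisson laws, with $\mathbf P(A(\infty)\ge 1)\,\mathbf P(B(-\infty)\ge 1)=(1-e^{-a})(1-e^{-b})$ in the denominator. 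Writing $c\triangleq 1-\delta$, this yields
\begin{equation}
f_R(\delta)=\frac{e^{-\lambda T}}{(1-e^{-a})(1-e^{-b})}\sum_{m=1}^{+\infty}\sum_{n=1}^{+\infty}\frac{a^m}{m!}\frac{b^n}{n!}\,(n+m)\,c^{n+m-1}.
\end{equation}

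The key step is then to evaluate the double series $S(c)\triangleq\sum_{m,n\ge 1}\frac{a^m}{m!}\frac{b^n}{n!}(n+m)c^{n+m-1}$, and the observation that makes this painless is that $(n+m)c^{n+m-1}=\frac{d}{dc}c^{n+m}$. Since $\sum_{m,n\ge 1}\frac{(ac)^m}{m!}\frac{(bc)^n}{n!}=(e^{ac}-1)(e^{bc}-1)$ is an entire function of $c$, term-by-term differentiation is legitimate and gives $S(c)=\frac{d}{dc}\big[(e^{ac}-1)(e^{bc}-1)\big]=(a+b)e^{(a+b)c}-a\,e^{ac}-b\,e^{bc}$.

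It then remains to substitute $a+b=\lambda T$ and $c=1-\delta$, factor $\lambda T\,e^{\lambda T(1-\delta)}$ out of $S$, and use $e^{-\lambda T}e^{\lambda T(1-\delta)}=e^{-\lambda T\delta}$ together with $a-\lambda T=-b$ and $b-\lambda T=-a$; the bracketed factor collapses to $1-\alpha e^{-(1-\alpha)\lambda T(1-\delta)}-(1-\alpha)e^{-\alpha\lambda T(1-\delta)}$, which is exactly the claimed expression for $f_R$. There is no genuine obstacle here: the only points deserving a word are the identification of the conditioning event (both order populations nonempty, so that the $n$-th and $(n+1)$-th order statistics of the pooled sample exist in Lemma~\ref{lem:ConditionalDistributions}) and the interchange of $\frac{d}{dc}$ with the sum, which is standard power-series calculus; one may additionally verify $\int_0^1 f_R(\delta)\,d\delta=1$ as a sanity check.
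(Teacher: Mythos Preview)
Your proposal is correct and follows exactly the route the paper takes: compute the conditional density from Lemma~\ref{lem:ConditionalDistributions}, then average against the two independent (zero-truncated) Poisson laws. The paper in fact leaves the evaluation of the resulting double series implicit (``which in turn yields the result''), whereas you supply it cleanly via the identity $(n+m)c^{n+m-1}=\frac{d}{dc}c^{n+m}$ and term-by-term differentiation of $(e^{ac}-1)(e^{bc}-1)$; your identification of the conditioning event $\{A(\infty)\ge 1,\ B(-\infty)\ge 1\}$ is also the one used throughout the paper for the price quantities.
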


Thus it appears that in the case of uniform prices, the distribution of the clearing price range is an exponential distribution with parameter $\lambda T$ modified by exponential terms functions of the market imbalance.
The general case does not exhibit such a simple formula for the exact distribution. However, the next proposition proves that, even in the general case, the clearing price range is asymptotically exponential.
\begin{proposition}
\label{prop:AsymptoticPriceRange}
In the call auction model with orders submission rate $\lambda$, market imbalance $\alpha$, auction length $T$ and price distribution $F$ (absolutely continuous with continuous density $f$), 
in the very liquid case where $\lambda T\to+\infty$,
the distribution of the clearing price range is asymptotically exponential with parameter (inverse of the mean) $\lambda T f(F^{-1}(1-\alpha))$.
\end{proposition}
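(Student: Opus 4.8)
The plan is to mirror the three-layer structure of the proofs of Propositions~\ref{prop:AsymptoticVolume} and~\ref{prop:AsymptoticPrices}: first identify the weak limit conditionally on a deterministic total order count $N$ and a deterministic bid count $n$, then lift the determinism on $n$ (conditionally $\mathcal B(N,1-\alpha)$ given $N$), and finally lift it on $N$ (Poisson with parameter $\lfloor\lambda T\rfloor$), invoking at the last two stages a limit theorem for sequences of random variables with random indices \citep[Theorem~4]{Korolev1993}.

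\textbf{Conditional limit.} By Lemma~\ref{lem:ConditionalDistributions}, conditionally on there being $N$ orders of which $n$ are bids, $R$ is the spacing $R_{N,n}=X_{(n+1)}-X_{(n)}$ between two consecutive central order statistics of an i.i.d. $F$-sample of size $N$. Writing $X_{(k)}=F^{-1}(U_{(k)})$ with $U_{(1)}<\dots<U_{(N)}$ uniform order statistics, the spacing $U_{(n+1)}-U_{(n)}$ is $\mathrm{Beta}(1,N)$-distributed \emph{for every} $n$, hence $N\,(U_{(n+1)}-U_{(n)})\overset{d}{\longrightarrow}\mathcal E(1)$; combining this with the mean-value identity $X_{(n+1)}-X_{(n)}=(U_{(n+1)}-U_{(n)})/f(F^{-1}(\xi_N))$ for some $\xi_N$ between $U_{(n)}$ and $U_{(n+1)}$, the almost sure convergence $U_{(n)},U_{(n+1)}\to 1-\alpha$ whenever $n/N\to 1-\alpha$, and the continuity of $f$ at $F^{-1}(1-\alpha)$, a Slutsky argument gives
\begin{equation}
	N\,f\!\left(F^{-1}(1-\alpha)\right)R_{N,n_N}\ \overset{d}{\underset{N\to+\infty}\longrightarrow}\ \mathcal E(1)
\end{equation}
for any deterministic sequence $(n_N)$ with $n_N/N\to 1-\alpha$. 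This is precisely where the strengthened hypothesis that $f$ is \emph{continuous} (and not merely a density, as in Proposition~\ref{prop:AsymptoticPrices}) enters.

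\textbf{Randomising $n$, then $N$.} Taking $n_N$ to be $\mathcal B(N,1-\alpha)$ and then $N$ to be $\mathrm{Poisson}(\lfloor\lambda T\rfloor)$, I would rerun the quantile-expansion computations of the earlier proofs. The decisive simplification relative to the volume and price cases is that $R$ lives at scale $N^{-1}$ whereas the fluctuations of the binomial and Poisson indices are $O(N^{1/2})=o(N)$: writing $l_N(s)$ for the infimum of the $s$-quantile of the binomial index, one has $l_N(s)/N=(1-\alpha)+O(N^{-1/2})$, so along each quantile level $s$ the normalising constant $N f(F^{-1}(l_N(s)/N))$ shares the \emph{same} deterministic limit $f(F^{-1}(1-\alpha))$, independent of $s$. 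Hence the random-index theorem returns the unchanged limit $\mathcal E(1)$ with no variance inflation --- in sharp contrast with Propositions~\ref{prop:AsymptoticVolume} and~\ref{prop:AsymptoticPrices}, where the analogous quantile perturbations did contribute. The same is true at the Poisson layer; and since $N/\lfloor\lambda T\rfloor\to 1$ in probability (concentration of the Poisson law) and $\lfloor\lambda T\rfloor/(\lambda T)\to 1$, one last Slutsky step replaces $N$ by $\lambda T$ in the normalisation, yielding $\lambda T\,f(F^{-1}(1-\alpha))\,R\overset{d}{\longrightarrow}\mathcal E(1)$, i.e. $R$ is asymptotically $\mathcal E\!\left(\lambda T f(F^{-1}(1-\alpha))\right)$ as $\lambda T\to+\infty$.

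\textbf{Main obstacle.} The delicate point is not any single computation but the rigorous double application of the random-index theorem: one must check that the conditional convergence above is sufficiently regular along the relevant (random) index sequences --- i.e. that the family $\{\,N f(F^{-1}(1-\alpha))R_{N,n}:\ n/N\to 1-\alpha\,\}$ meets the uniformity requirement of \citet[Theorem~4]{Korolev1993} --- and to keep track of which normalisation ($N$, $\lfloor\lambda T\rfloor$ or $\lambda T$) is in force at each step. As a consistency check, for the uniform distribution on $(0,1)$ one has $f\equiv 1$ and $F^{-1}(1-\alpha)=1-\alpha$, so the parameter reduces to $\lambda T$, which is exactly the $\lambda T\to+\infty$ limit, after the rescaling $\delta\mapsto\delta/(\lambda T)$, of the exact density of Proposition~\ref{prop:UnconditionalPriceRange-UnifPrice}.
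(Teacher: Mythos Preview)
Your proposal is correct and follows essentially the same three-layer route as the paper: conditional spacing limit for central order statistics (the paper cites \citet{Pyke1965} via \citet{David2003} rather than deriving it through the Beta$(1,N)$ uniform spacing and a Slutsky argument, but the content is identical), then two successive applications of \citet[Theorem~4]{Korolev1993} to randomise the binomial bid count and the Poisson total. Your explicit explanation of \emph{why} the random-index steps leave the exponential limit unchanged---namely that the $O(N^{1/2})$ index fluctuations are negligible against the $O(N)$ scale of the spacing normalisation, so the quantile-dependent centring/scaling collapses to an $s$-independent constant---is a point the paper leaves implicit (it simply states that ``it turns out'' the weak limit is the same), and your uniform-case consistency check is an addition not present in the original.
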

\begin{proof}
Conditionally on counting a total of $N$ orders including $n$ bid orders, the clearing price range $R_{N,n}$ is distributed as the range between the $n$-th and the $n+1$-th statistics of a random sample with size $N$ distributed according to $F$.
If $(n_N)_N$ is a sequence of integers such that $\frac{n_N}{N}\underset{N\to\infty}\longrightarrow1-\alpha$,
then the distribution of $N R_{N,n_N}$ is known to be weakly convergent to an exponential distribution with parameter $f(F^{-1}(1-\alpha))$ \citep[see e.g.][p.328, refering to \citep{Pyke1965}]{David2003}.
The remaining part of the proof mimicks the previous proofs, and as a consequence it is omitted for brevity. It turns out that when $N$ is Poisson with parameter $k$ and $n_N$ is binomial with parameters $N$ and $1-\alpha$, $N R_{N,n_N}$ has the same weak limit as in the deterministic case, hence the result.
\end{proof}

Figure \ref{fig:PriceRangeConvergence} illustrates the convergence of proposition \ref{prop:AsymptoticPriceRange} as the liquidity increases, using empirical simulations of the model.
\begin{figure}
\centering
\includegraphics[scale=0.5]{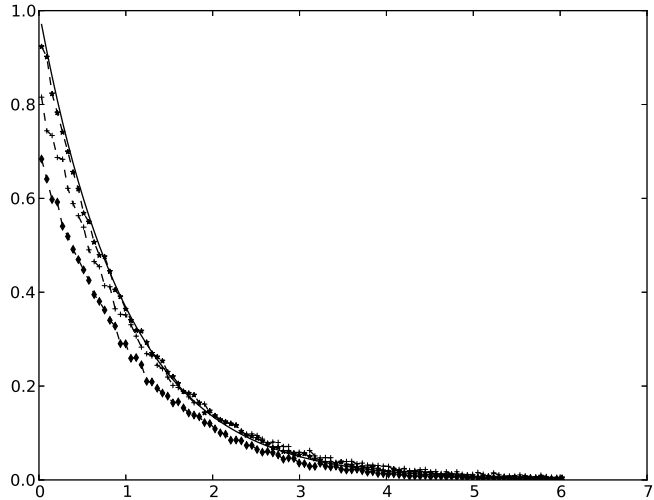}
\caption{Empirical densities of the clearing price range scaled by the parameter of proposition \ref{prop:AsymptoticPriceRange} as the liquidity increases : $\lambda=5$ (diamond), $\lambda=10$ (+), $\lambda=50$ (stars). In these examples, empirical price ranges are obtained by simulation with $T=1$, $\alpha=0.3$ and orders' prices normally distributed with mean $0$ and variance $1$.}
\label{fig:PriceRangeConvergence}
\end{figure}

As for the scaling of proposition \ref{prop:AsymptoticPriceRange}, the fact that the clearing price range scales as the inverse of $\lambda T$ as the liquidity increases is in agreement with \citet[][equation 4.2]{Mendelson1982}. Note also that this in agreement with the scaling of the spread suggested using dimensional analysis by \citet{Smith2003}, although in a very different trading environment (a continuous double auction).

Moreover, as for the limit distribution of proposition \ref{prop:AsymptoticPriceRange}, the basic Poisson model for the call auction thus suggests that the exponential distribution is an acceptable zero-intelligence approximation for the post-clearing spread distribution.
Recall that previous empirical results on the distribution of the bid-ask spread during a continuous double auction have highlighted the fat tails of these distribution \citep[see e.g.][on U.S. and Chinese stocks respectively]{Plerou2005,Gu2007}. However, the spread as any other financial quantity exhibits intraday variations \citep[see e.g.][among early works]{Chan1995}. In particular, the spread observed right after a call auction is significantly larger than the one observed during a continuous double auction \citep[see][figure 2]{Gu2007} and may therefore exhibit different statistical properties that, to our knowledge, are yet to be studied.

In a first step in this direction, we conclude this section by a brief empirical illustration of the exponential distribution as an approximation of the distribution of the post-clearing spread. 
For a stock that is part of the CAC 40 index, each trading day at the Paris stock Exchange starts with an opening call auction similar to the one studied in this paper \citep{Euronext2013,Euronext2014}. At 9:00 a.m., the auction is cleared, and there results a post-clearing spread (and book) which is the initial state for the continuous double auction that starts at this time.
We have randomly selected $12$ stocks from the CAC 40 index : Air Liquide (AIRP.PA, chemicals), Alstom
(ALSO.PA, industrial machinery), Axa (AXAF.PA, insurance), BNP Paribas (BNPP.PA, banking), Bouygues (BOUY.PA, construction, telecom and media), Carrefour (CARR.PA, retail/wholesale), Danone (DANO.PA, food products), Michelin (MICP.PA, tires), Renault (RENA.PA, automobiles), Sanofi (SASY.PA, pharmaceuticals), Vinci (SGEF.PA, construction), Total (TOTF.PA, oil and gas).
For each of these $12$ stocks, we use an extraction from the Thomson Reuters Tick History database to get the opening bid and ask quotes from March 3rd, 2011 to June 28th, 2013. After data cleaning (missing days), we obtain a sample of post-clearing spreads of size 592 to 595, depending on the stock.
Figure \ref{fig:SpreadExpFit-DistributionFunction} plots, for each of these $12$ stocks, the distribution function of the post-clearing spread, and the exponential distribution function fitted on the data using maximum-likelihood estimation.
To get an idea of the tail of the distribution (although remember we only have roughly 600 points), Figure \ref{fig:SpreadExpFit-SurvivalFunction} plots the logarithm of the empirical survival distribution function of the spread, and its fitted exponential counterpart (a straight line in the semilog scale used).

\begin{figure}
\centering
\begin{tabular}{ccc}
	\includegraphics[width=0.3\textwidth]{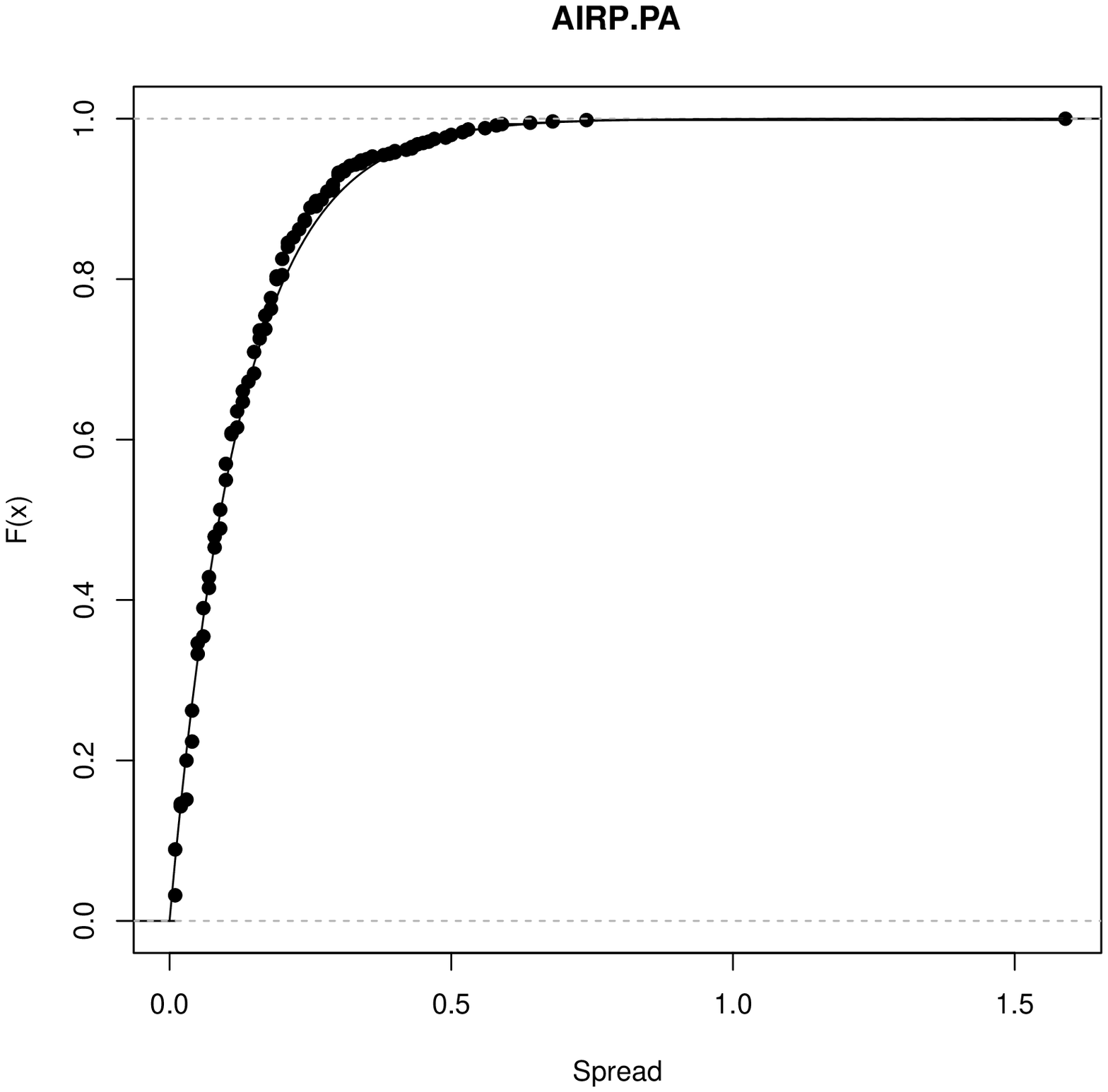}
	& \includegraphics[width=0.3\textwidth]{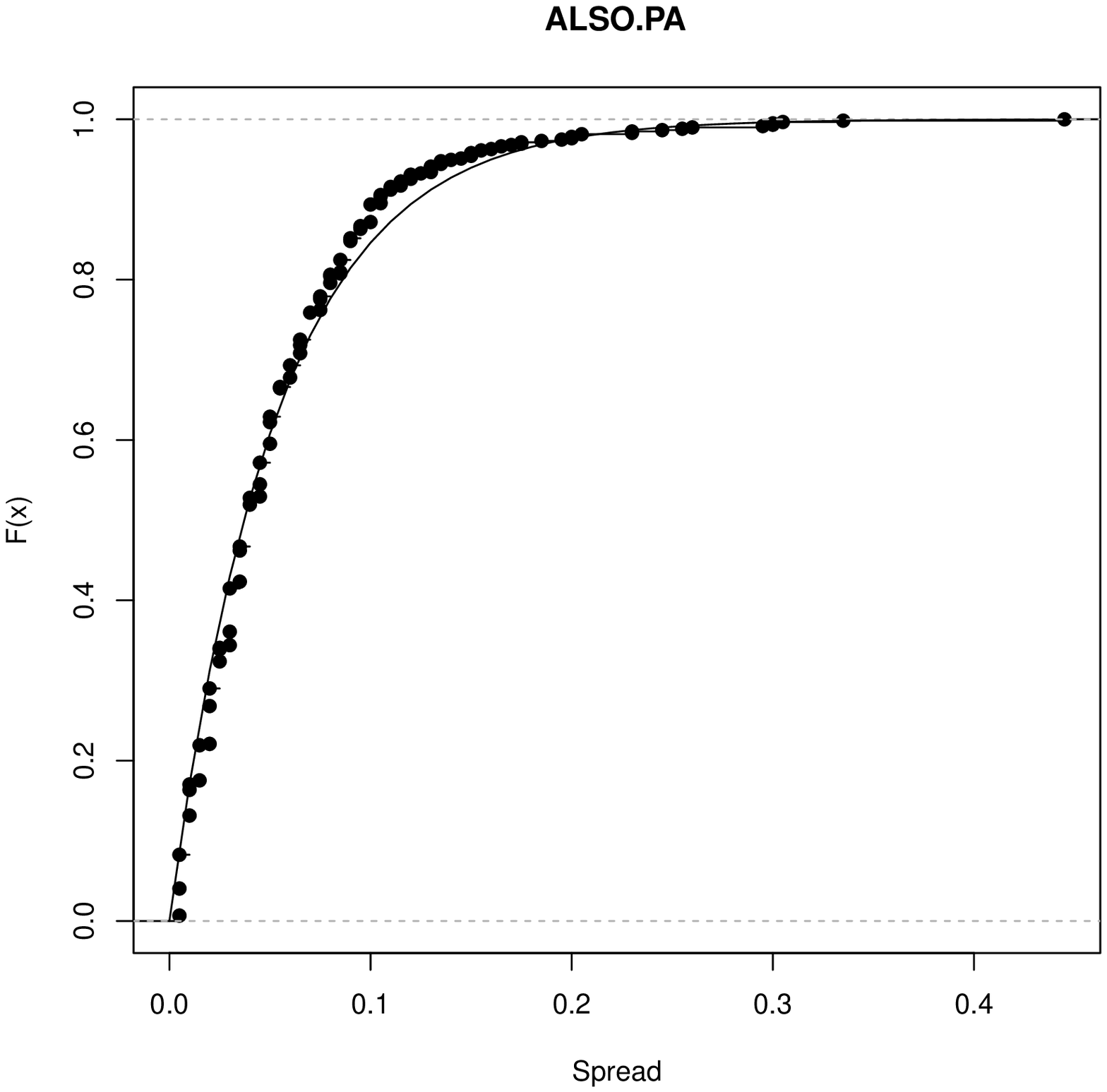}
	& \includegraphics[width=0.3\textwidth]{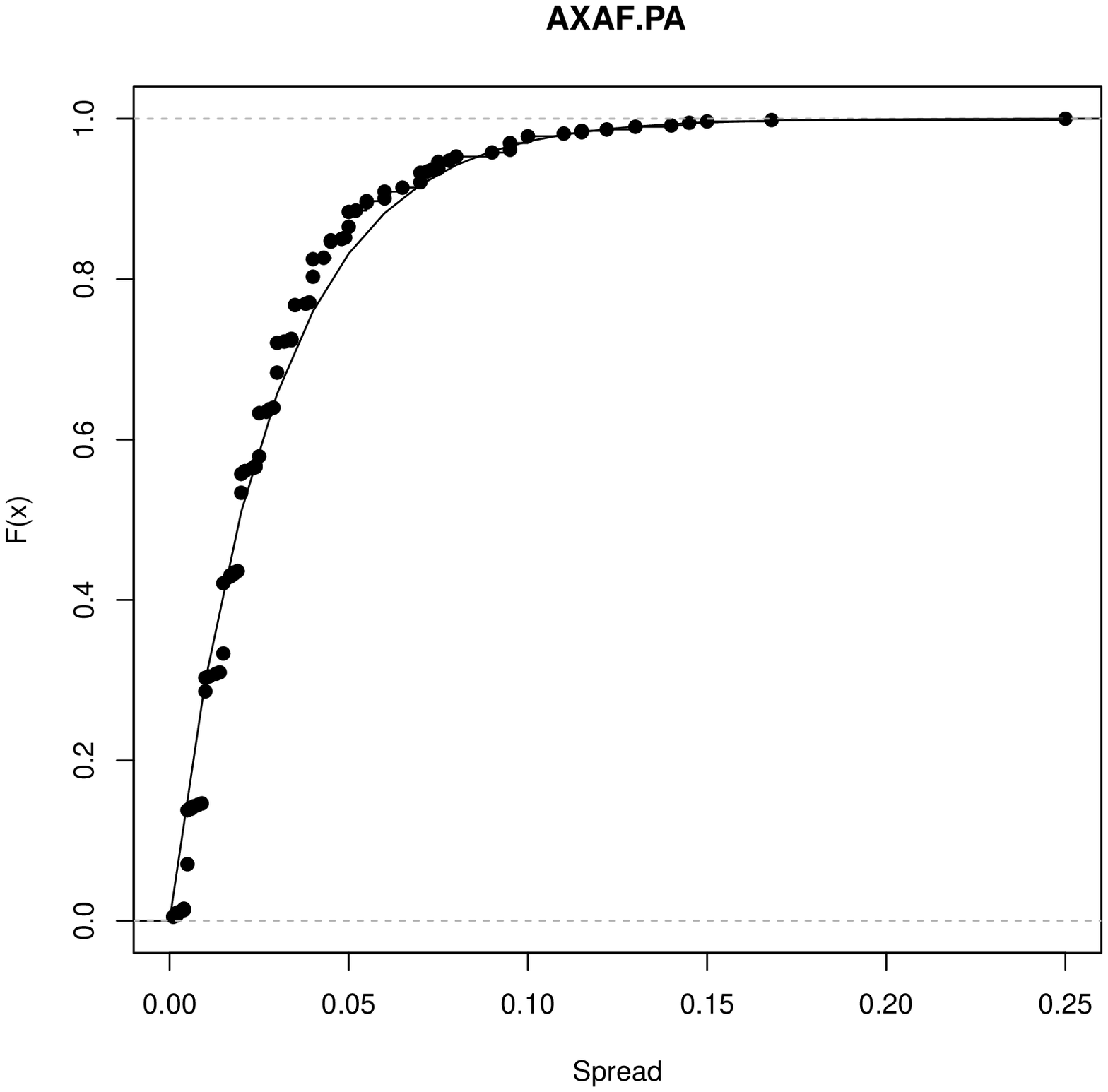}
	\\ \includegraphics[width=0.3\textwidth]{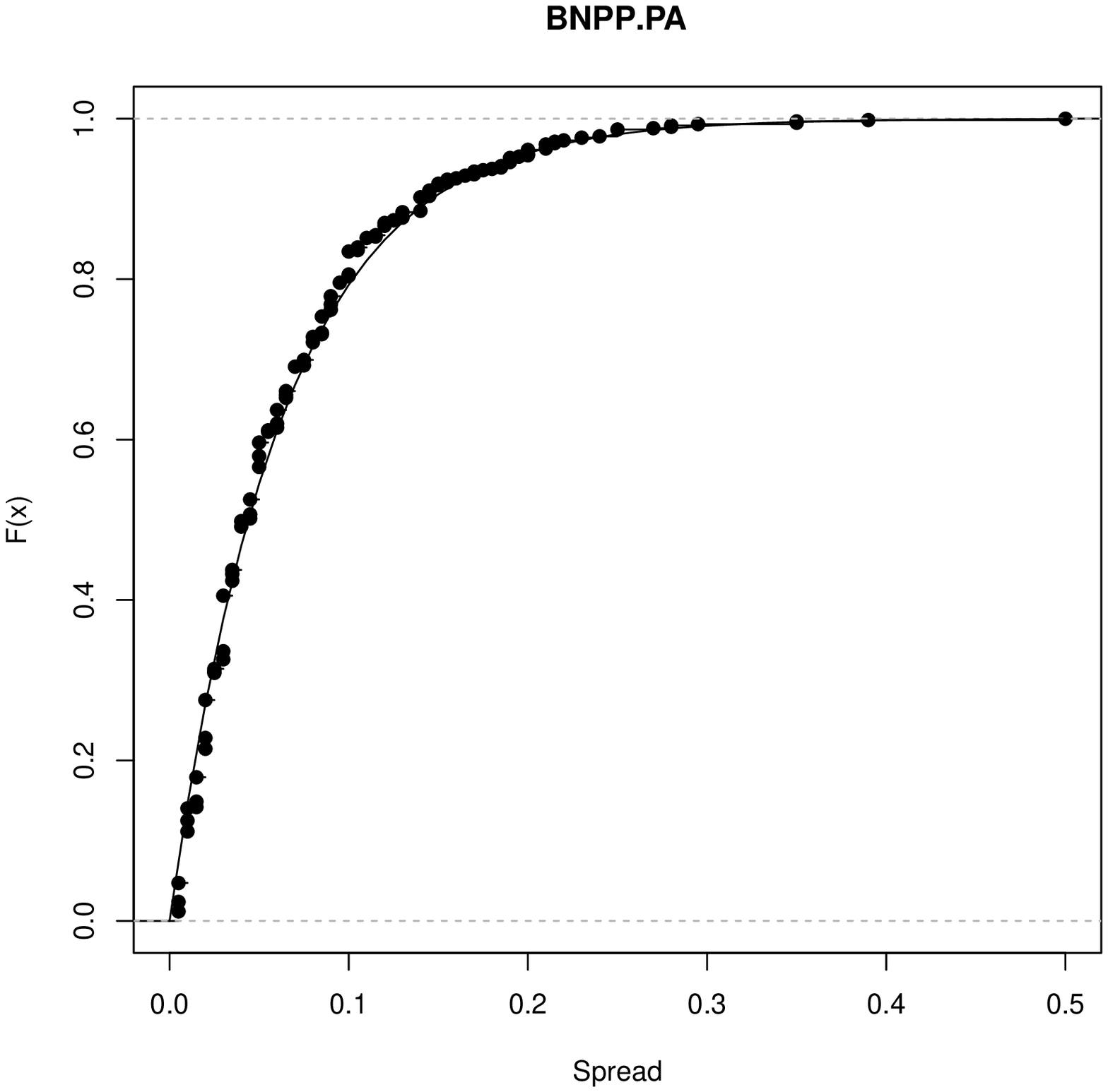}
	& \includegraphics[width=0.3\textwidth]{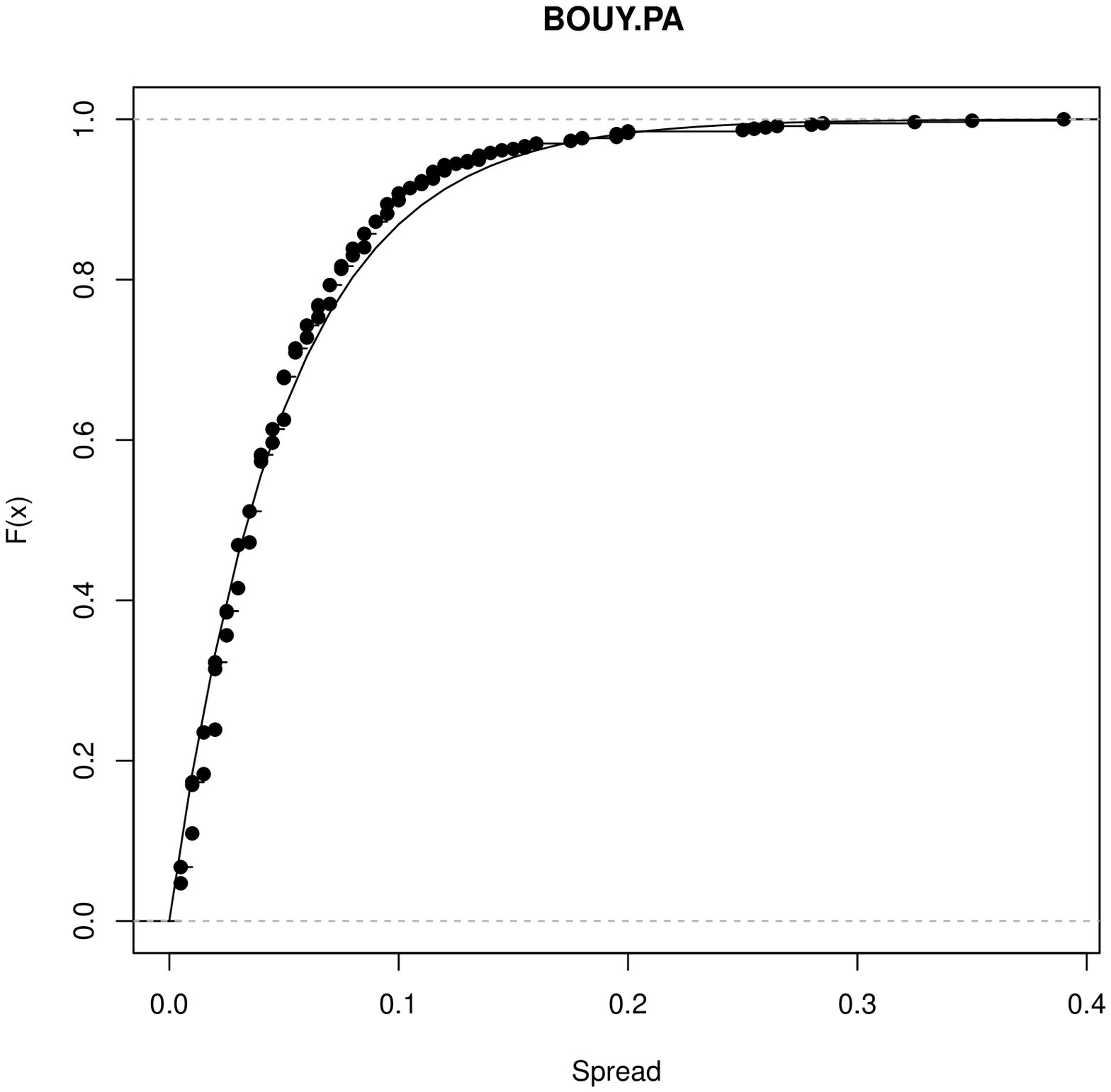}
	& \includegraphics[width=0.3\textwidth]{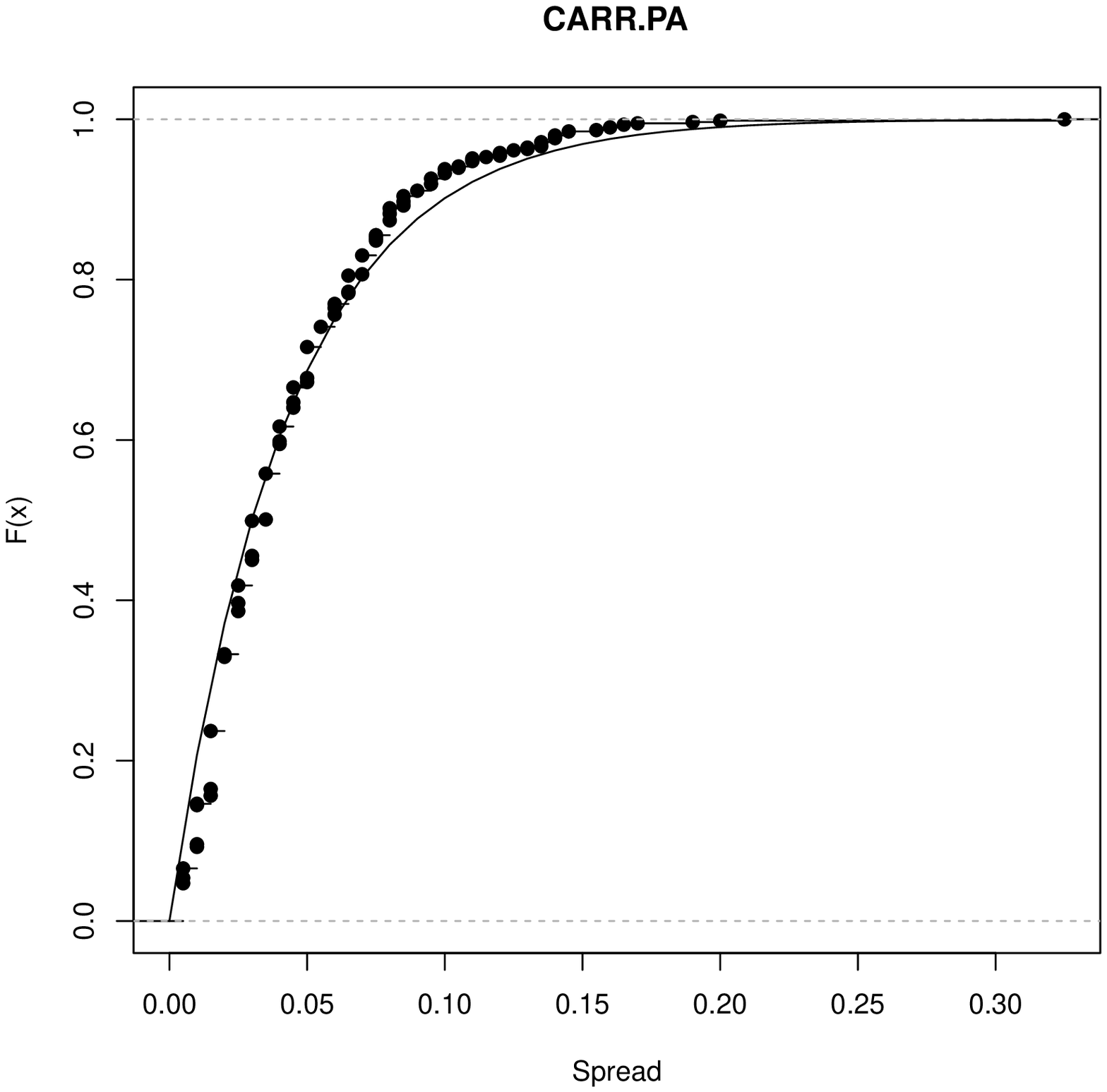}
	\\ \includegraphics[width=0.3\textwidth]{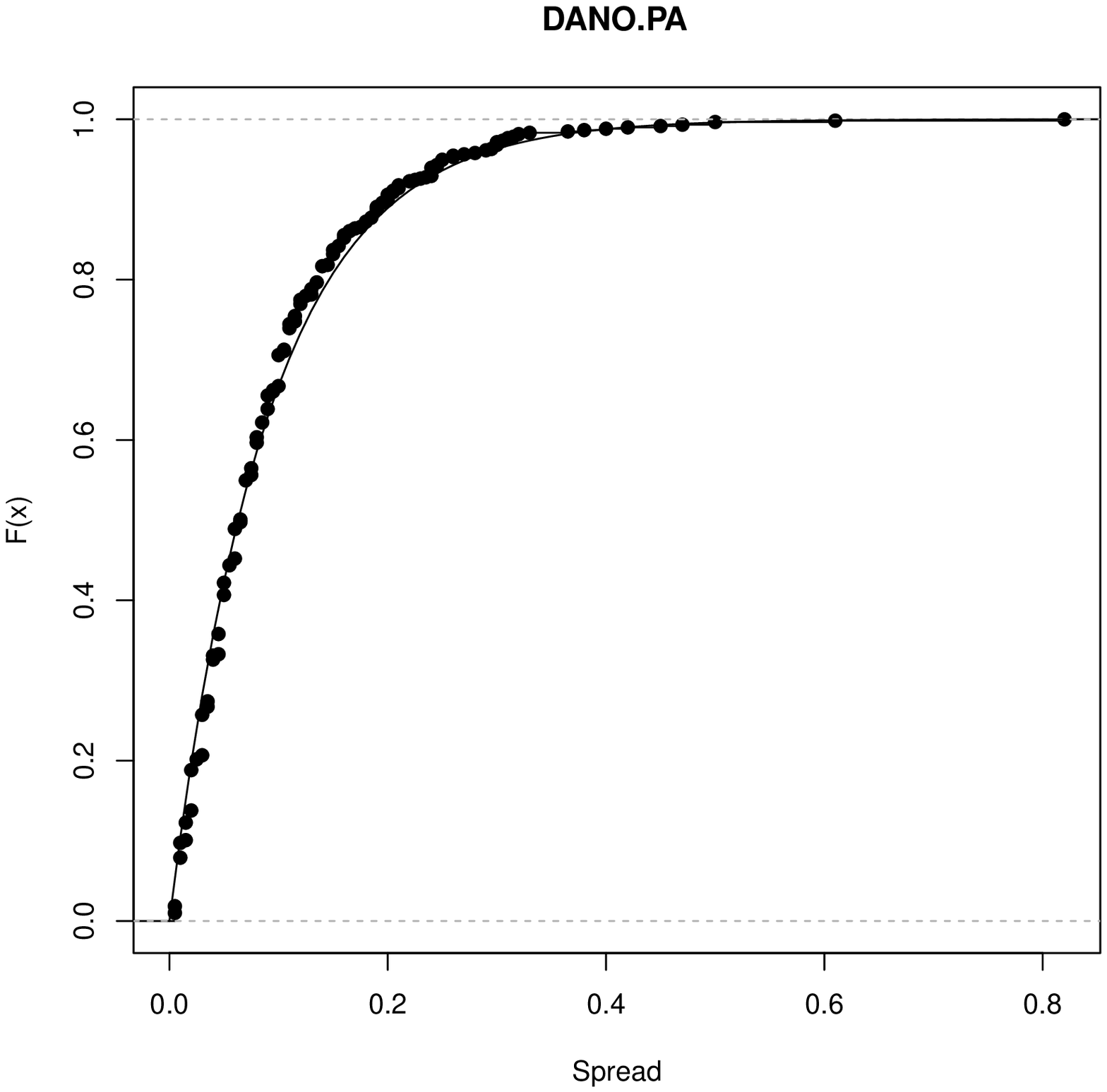}
	& \includegraphics[width=0.3\textwidth]{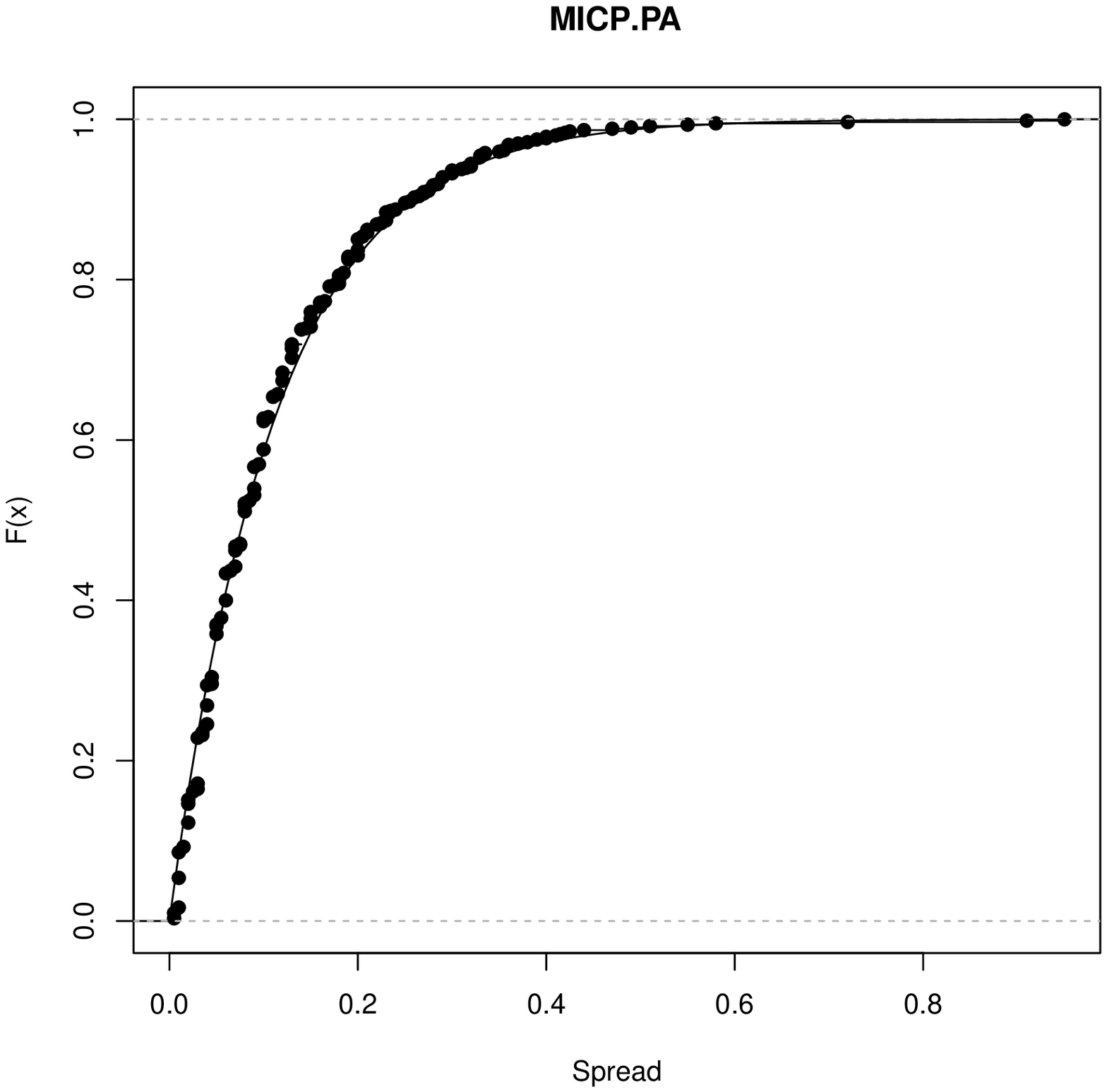}
	& \includegraphics[width=0.3\textwidth]{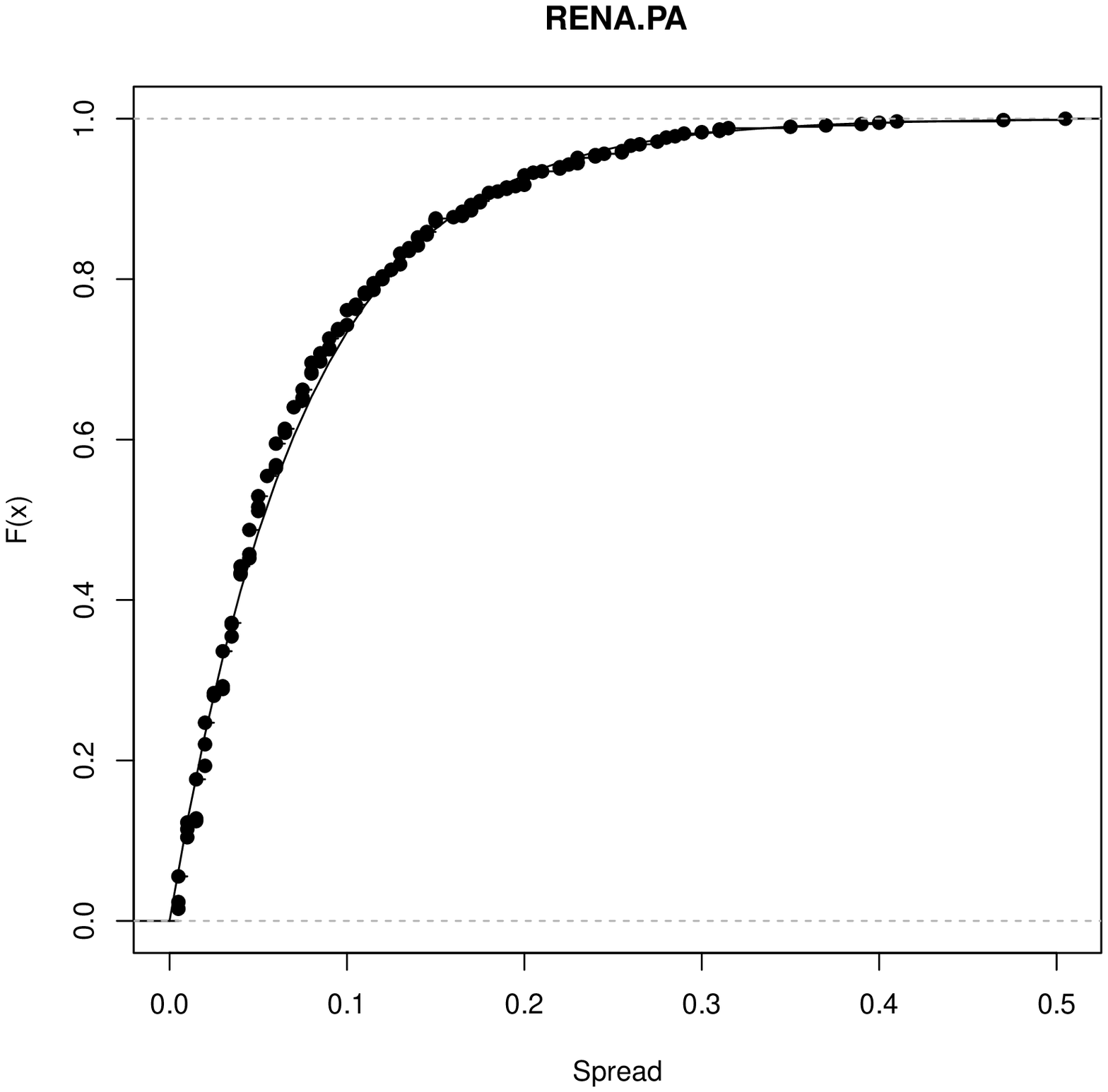}
	\\ \includegraphics[width=0.3\textwidth]{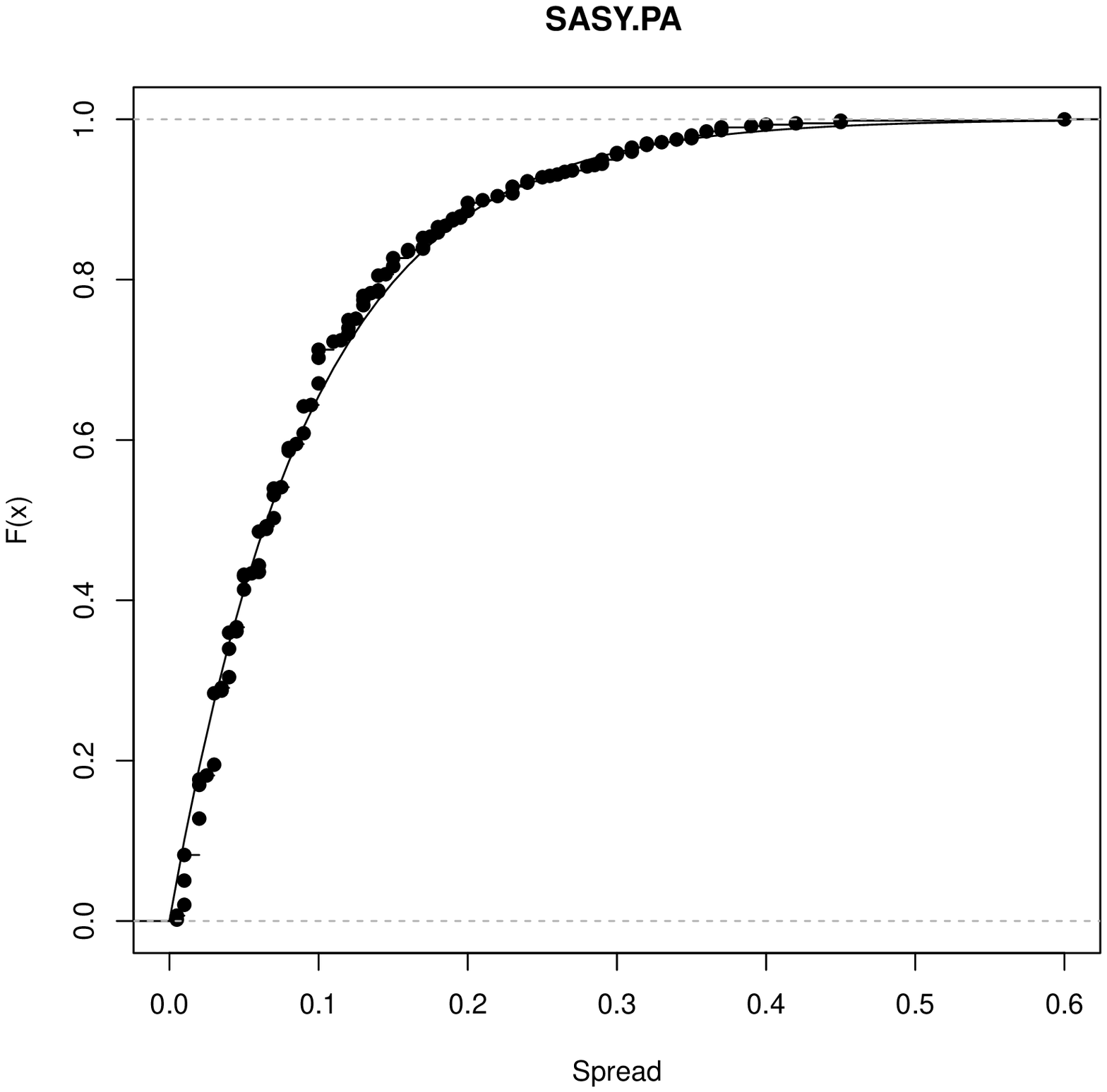}
	& \includegraphics[width=0.3\textwidth]{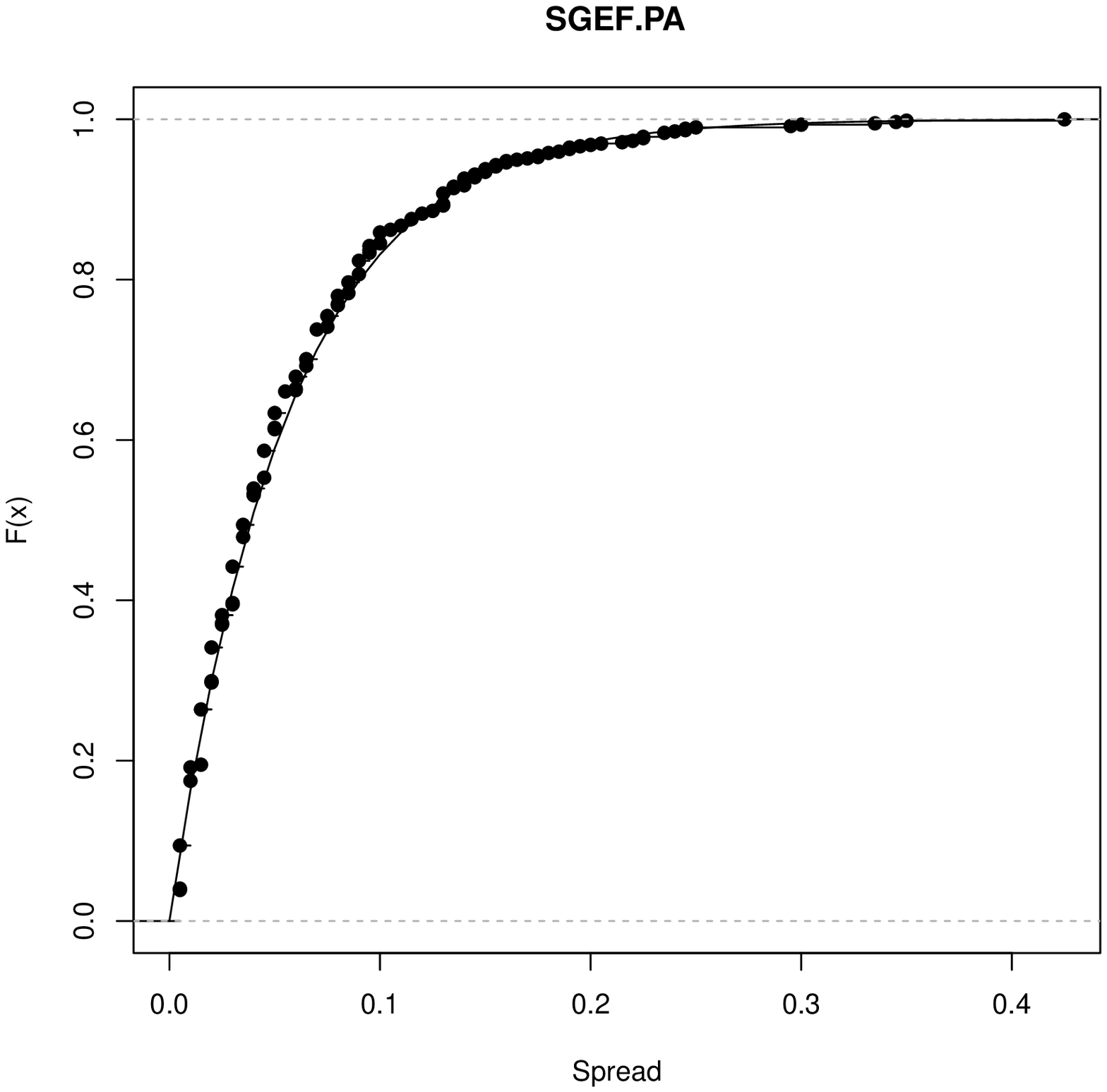}
	& \includegraphics[width=0.3\textwidth]{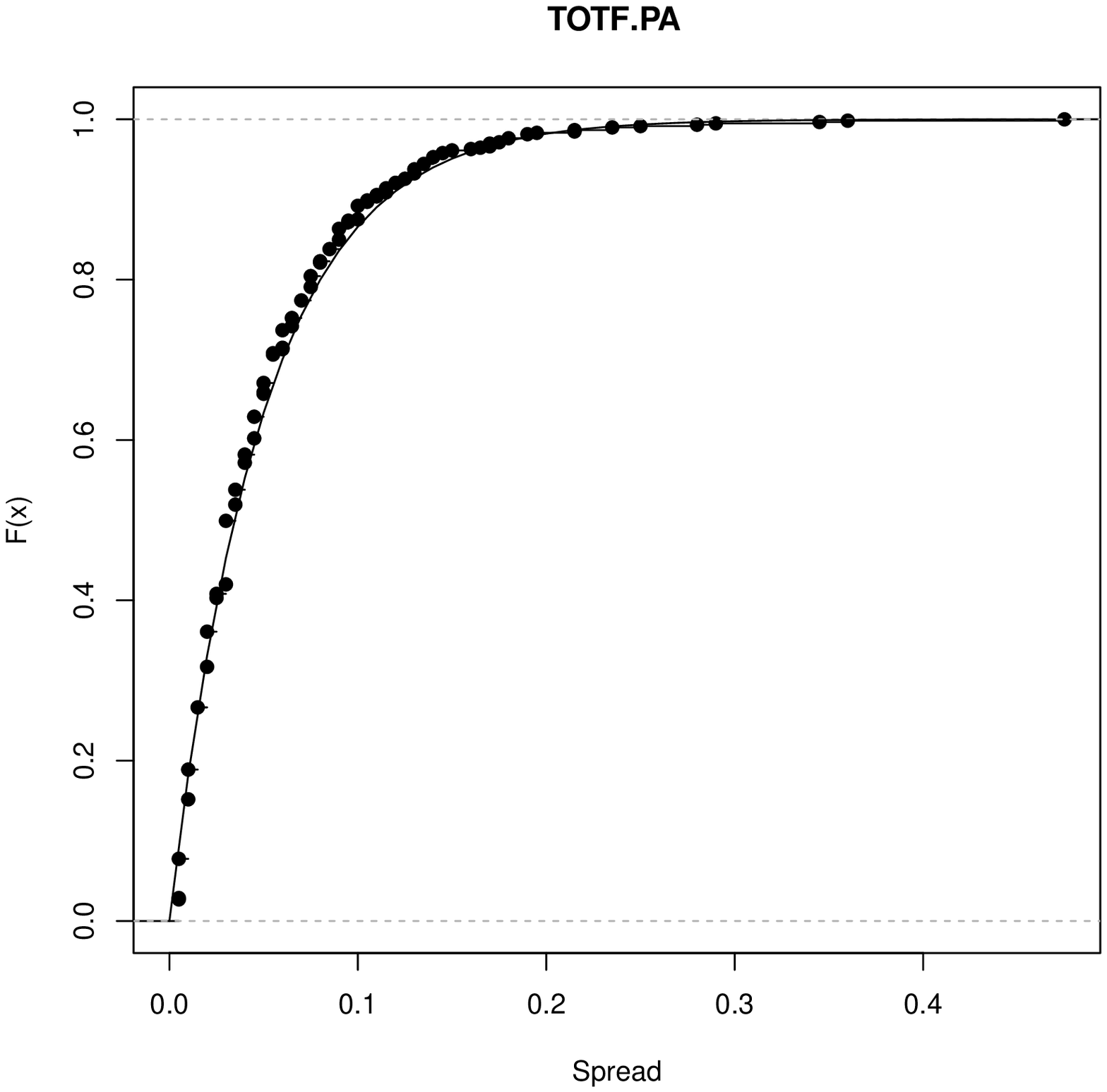}
\end{tabular}
\caption{Empirical cumulative distribution function of the post-clearing spread computed for $12$ CAC 40 stocks traded on the Paris Stock Exchange from March 2011 to June 2013. Full line is the exponential MLE fit.}
\label{fig:SpreadExpFit-DistributionFunction}
\end{figure}

\begin{figure}
\centering
\begin{tabular}{ccc}
	\includegraphics[width=0.3\textwidth]{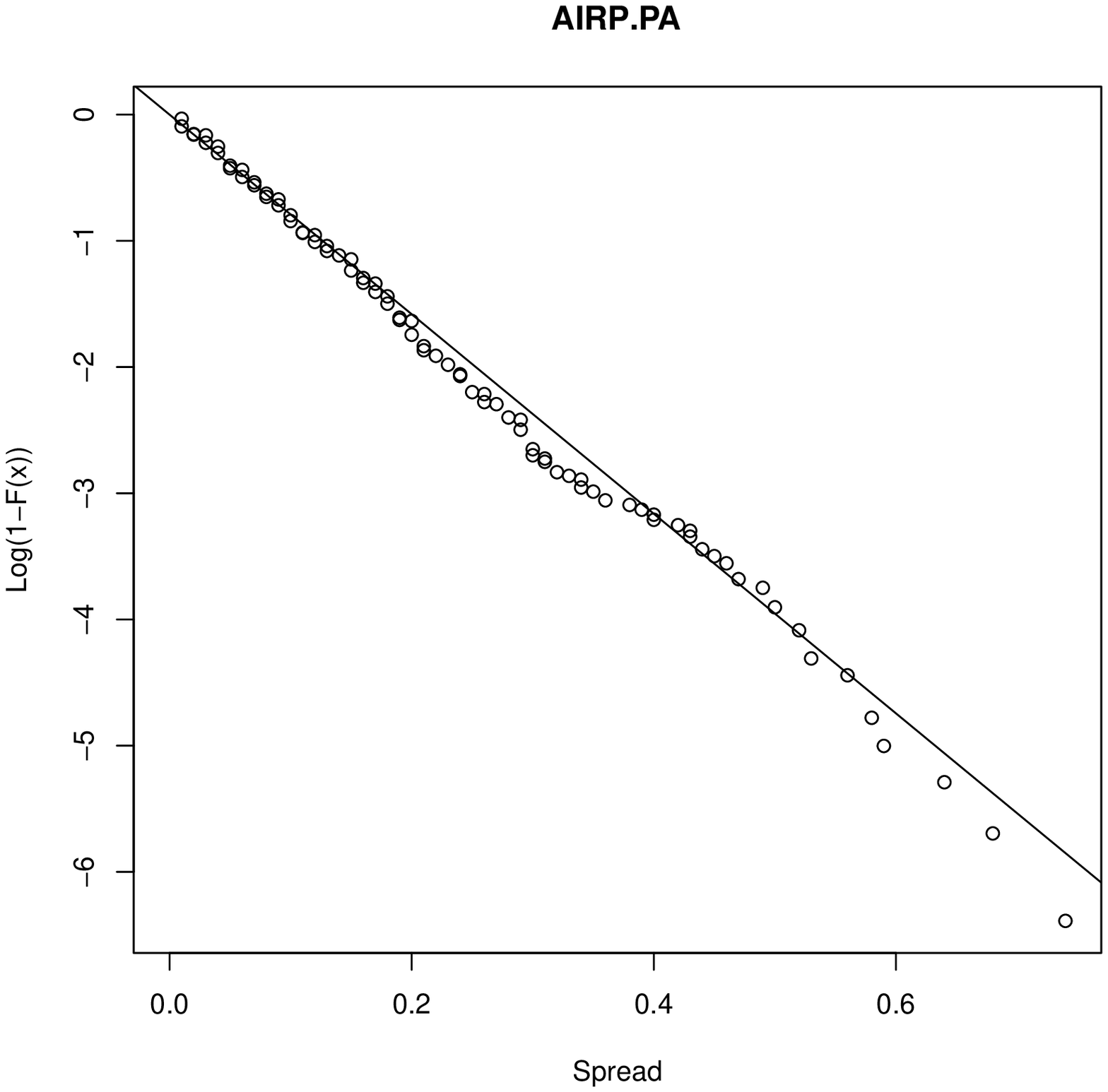}
	& \includegraphics[width=0.3\textwidth]{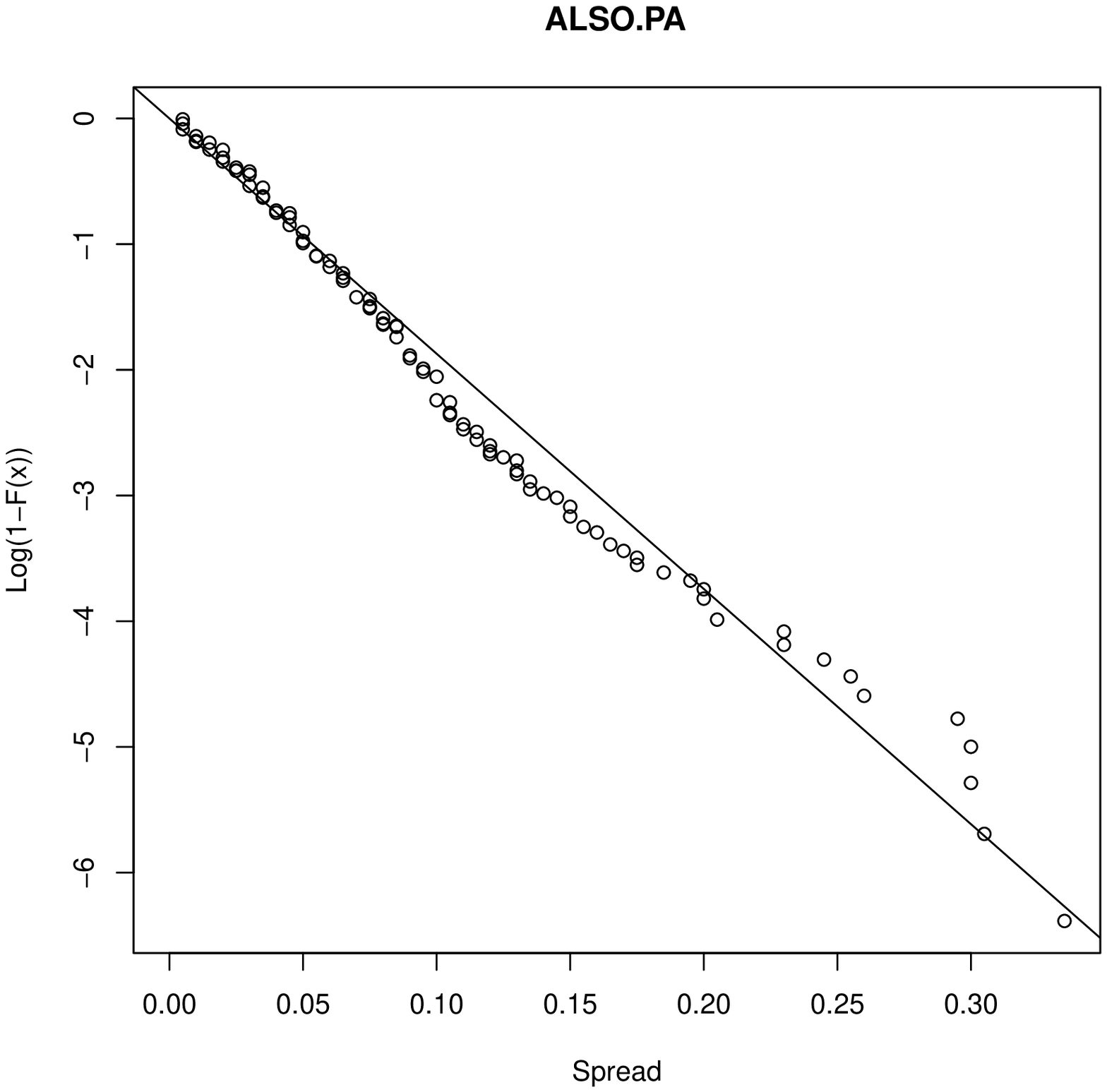}
	& \includegraphics[width=0.3\textwidth]{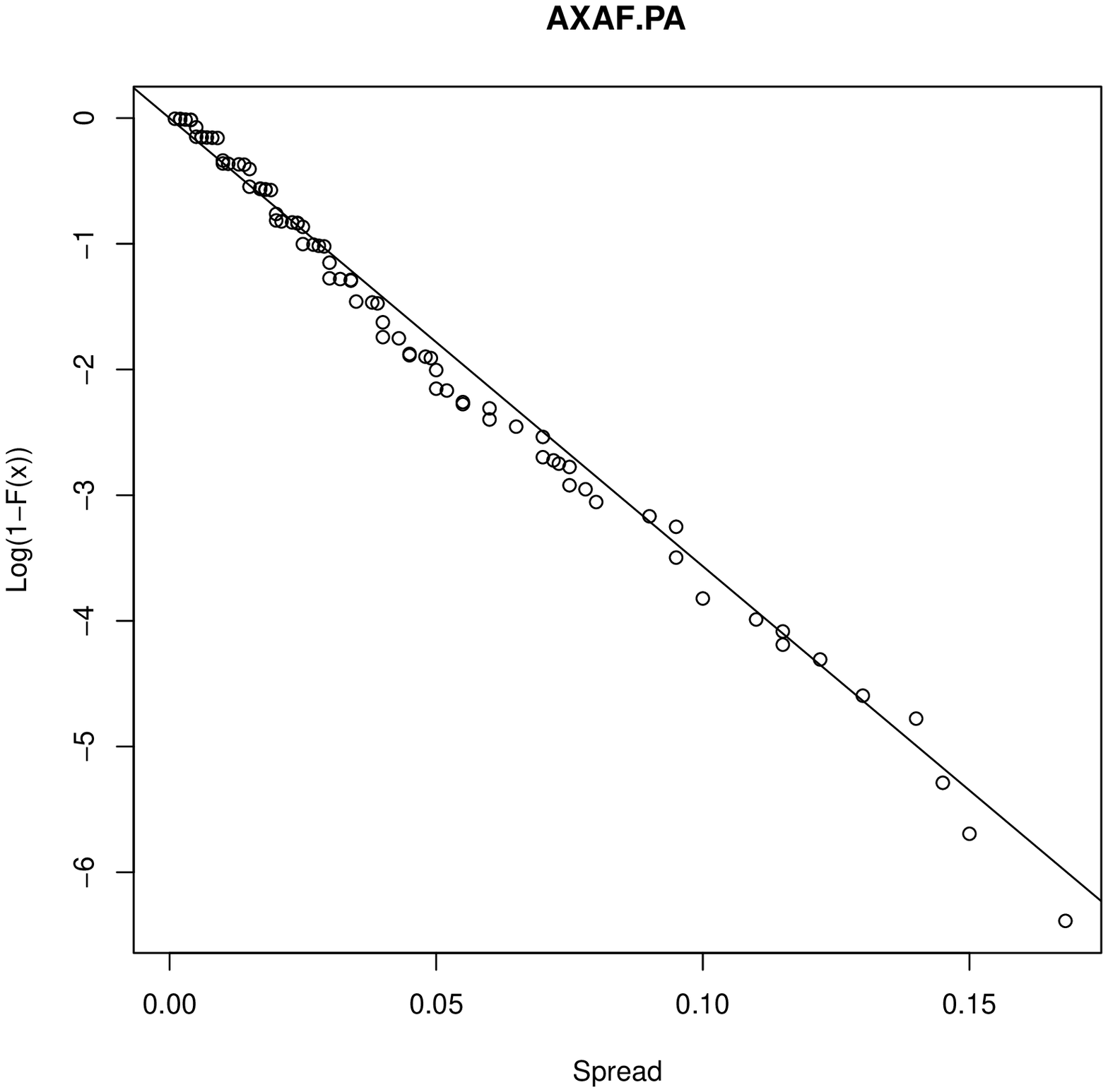}
	\\ \includegraphics[width=0.3\textwidth]{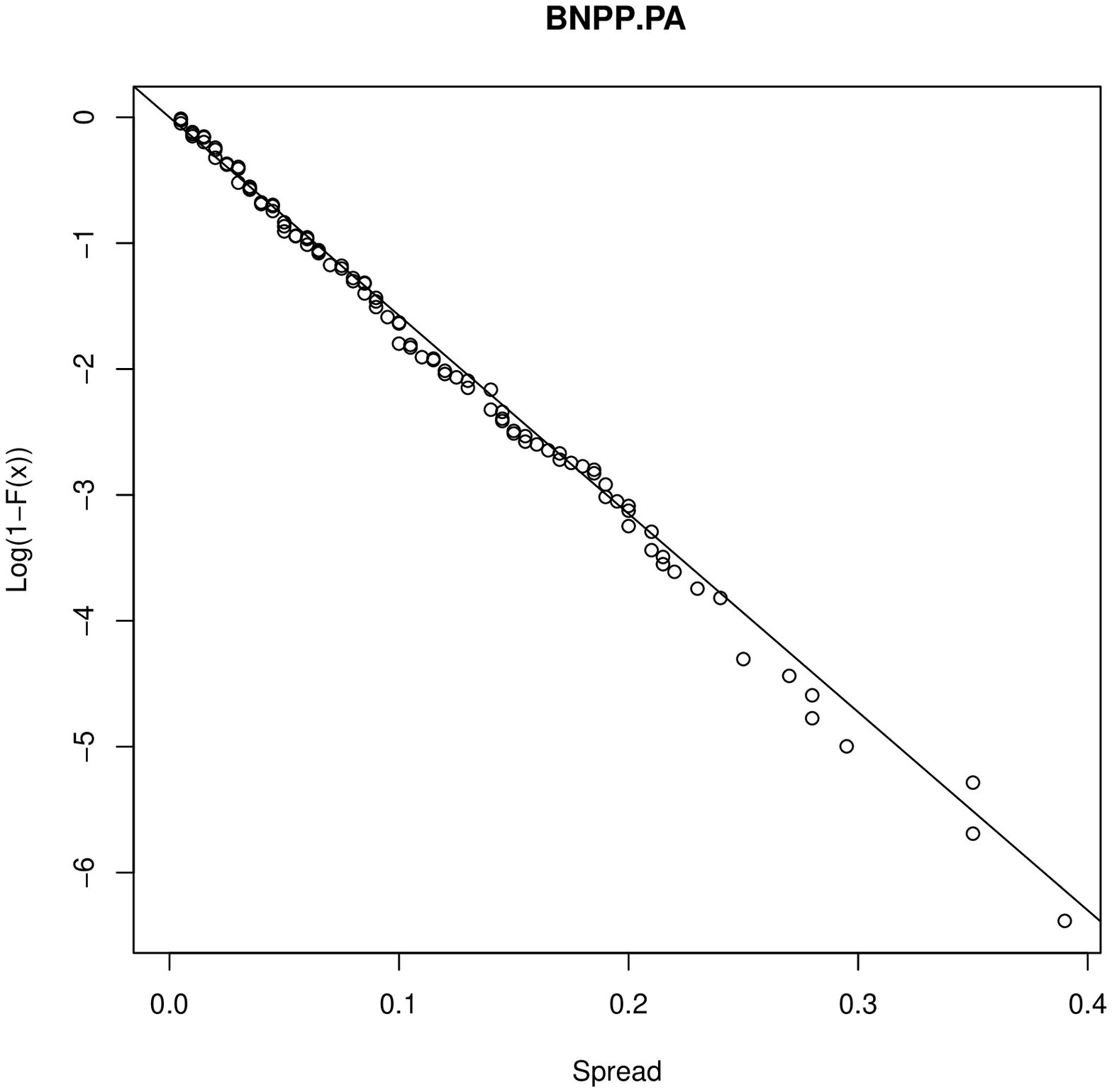}
	& \includegraphics[width=0.3\textwidth]{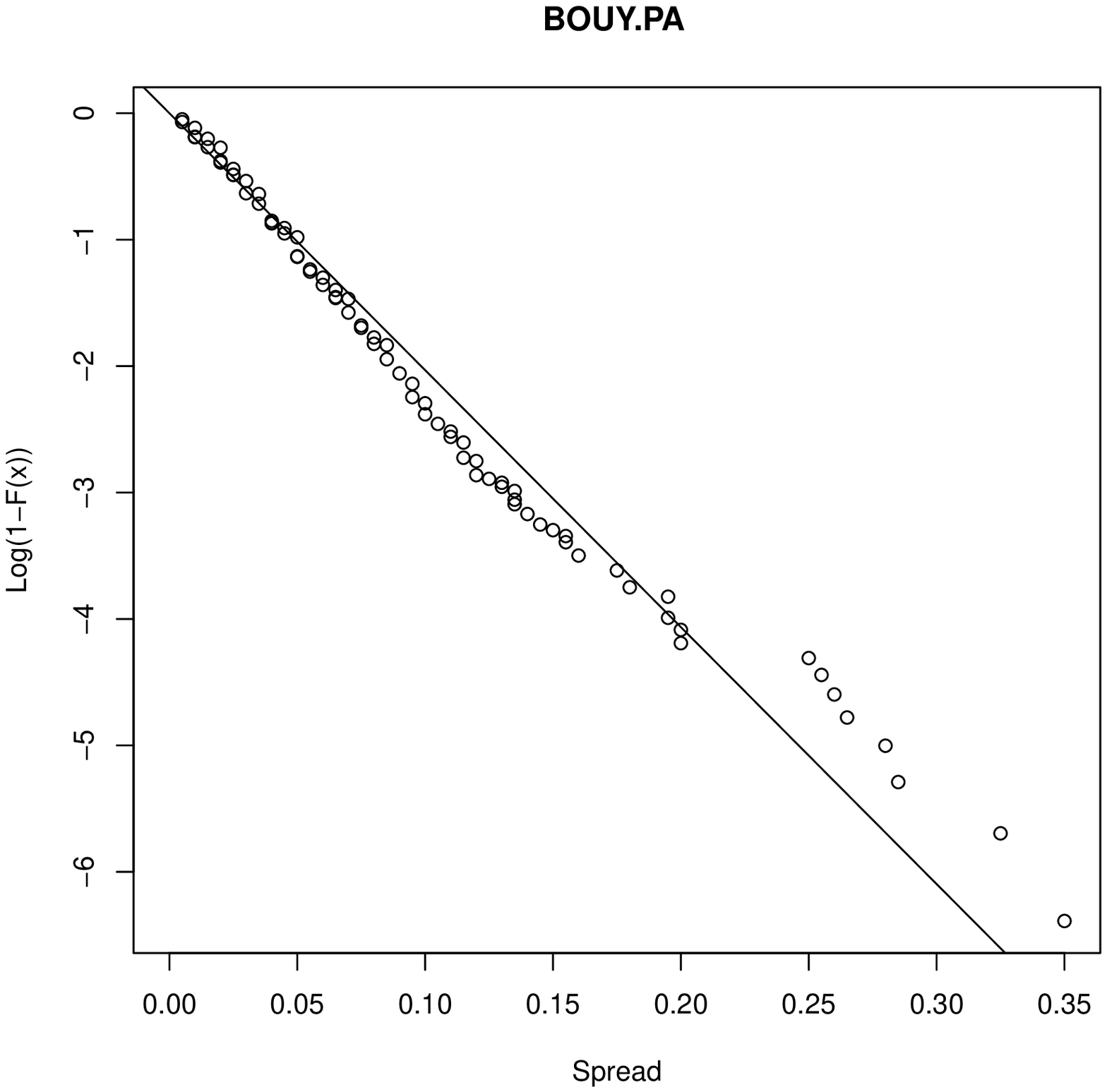}
	& \includegraphics[width=0.3\textwidth]{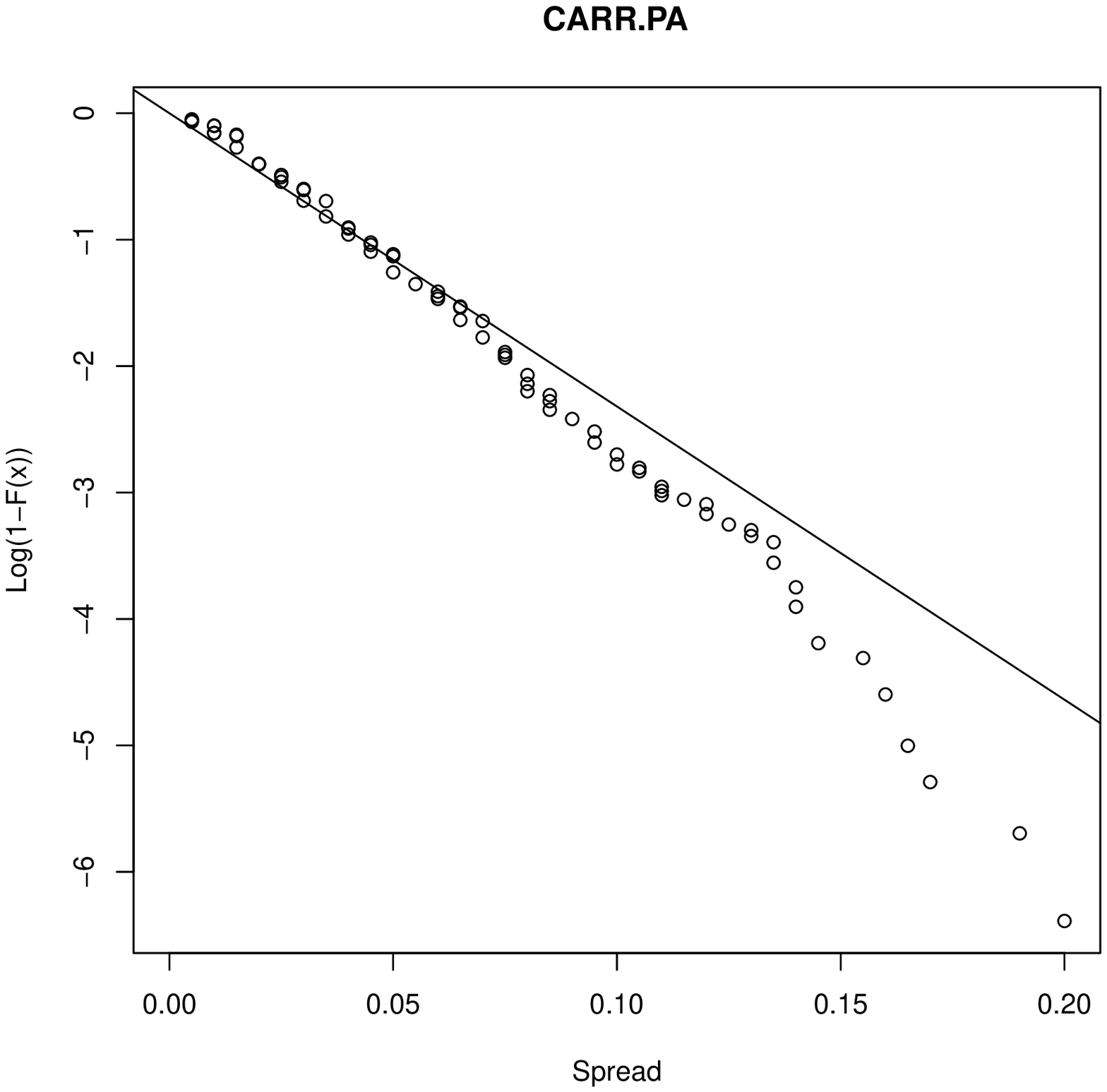}
	\\ \includegraphics[width=0.3\textwidth]{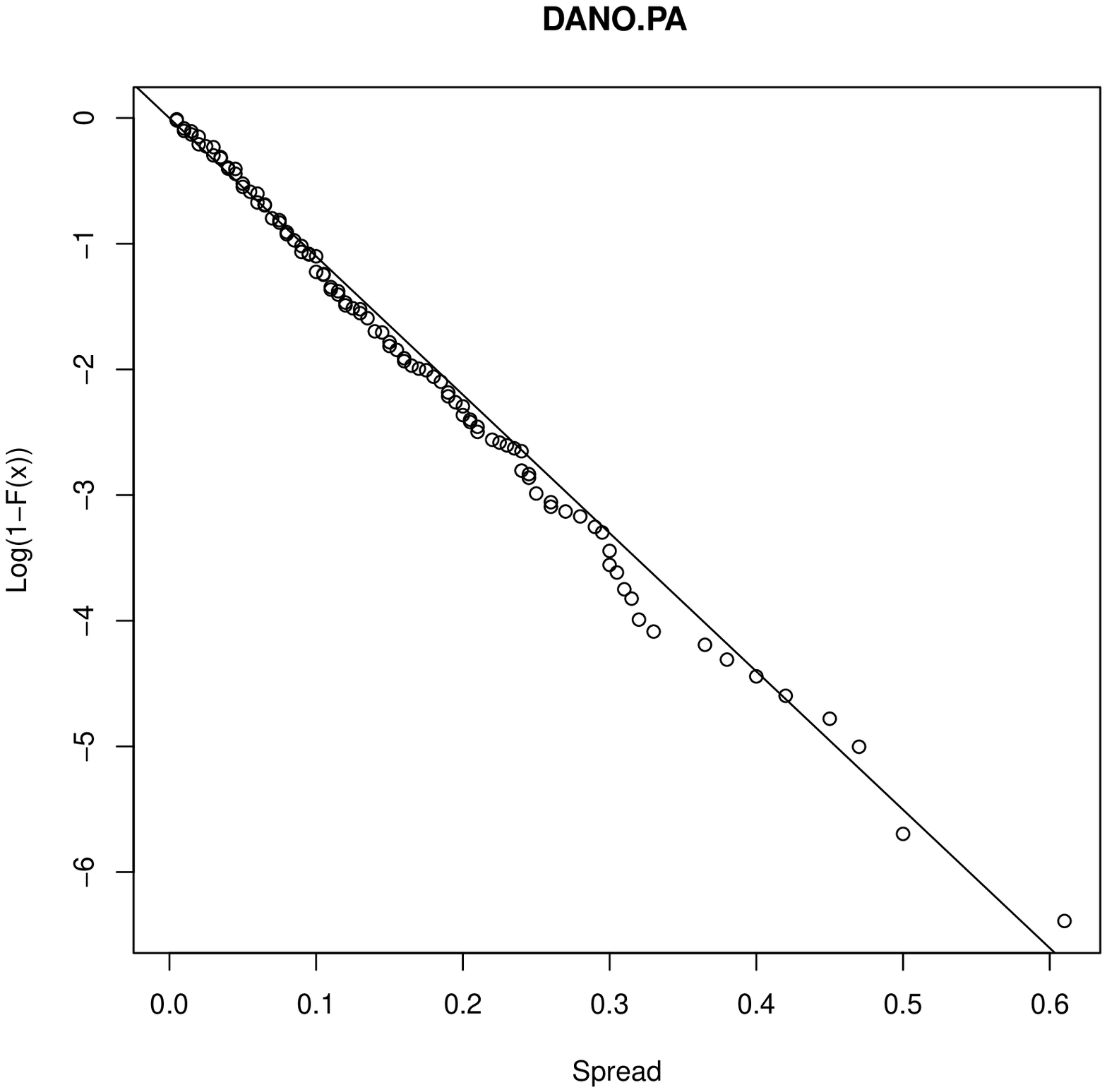}
	& \includegraphics[width=0.3\textwidth]{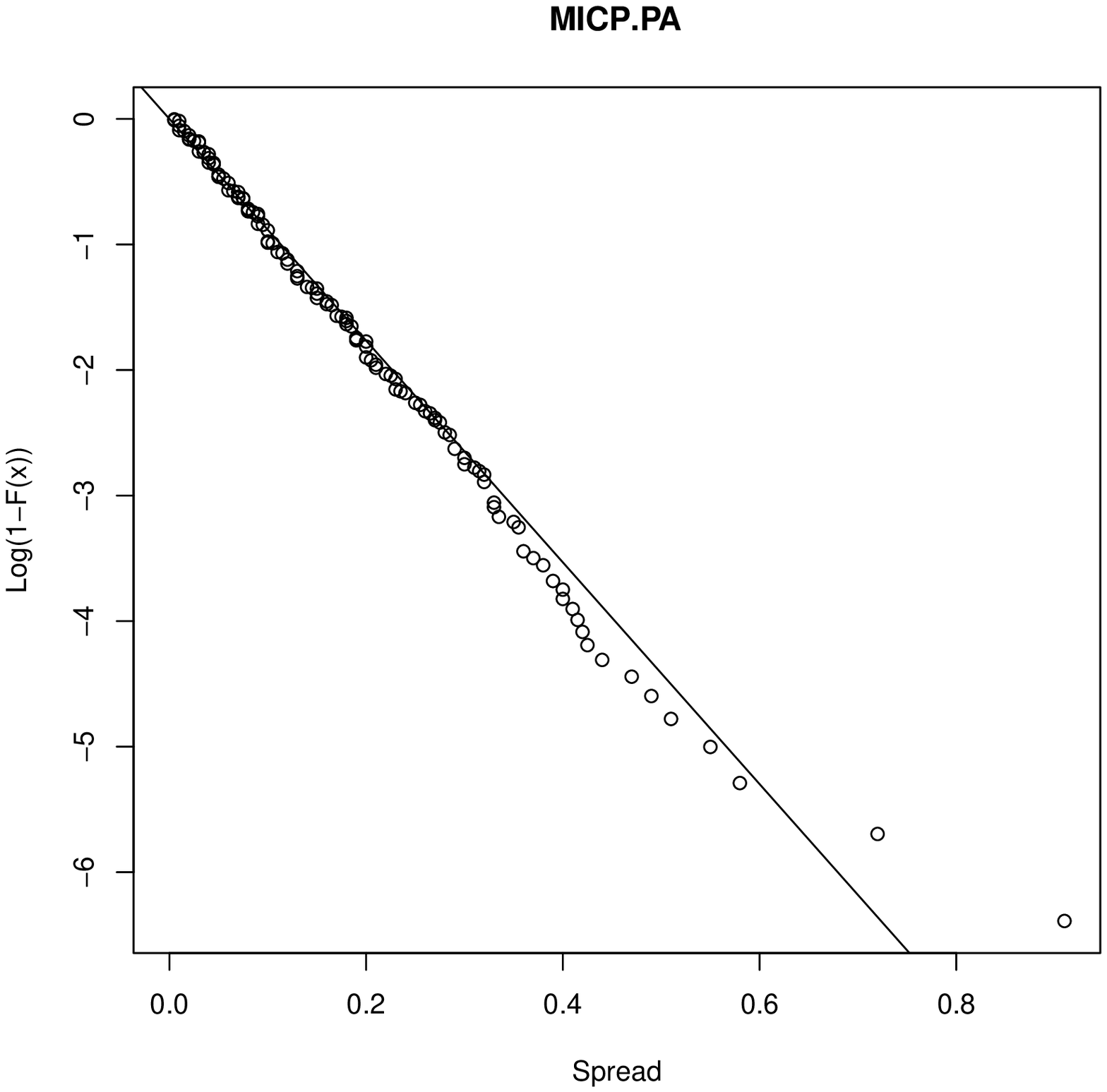}
	& \includegraphics[width=0.3\textwidth]{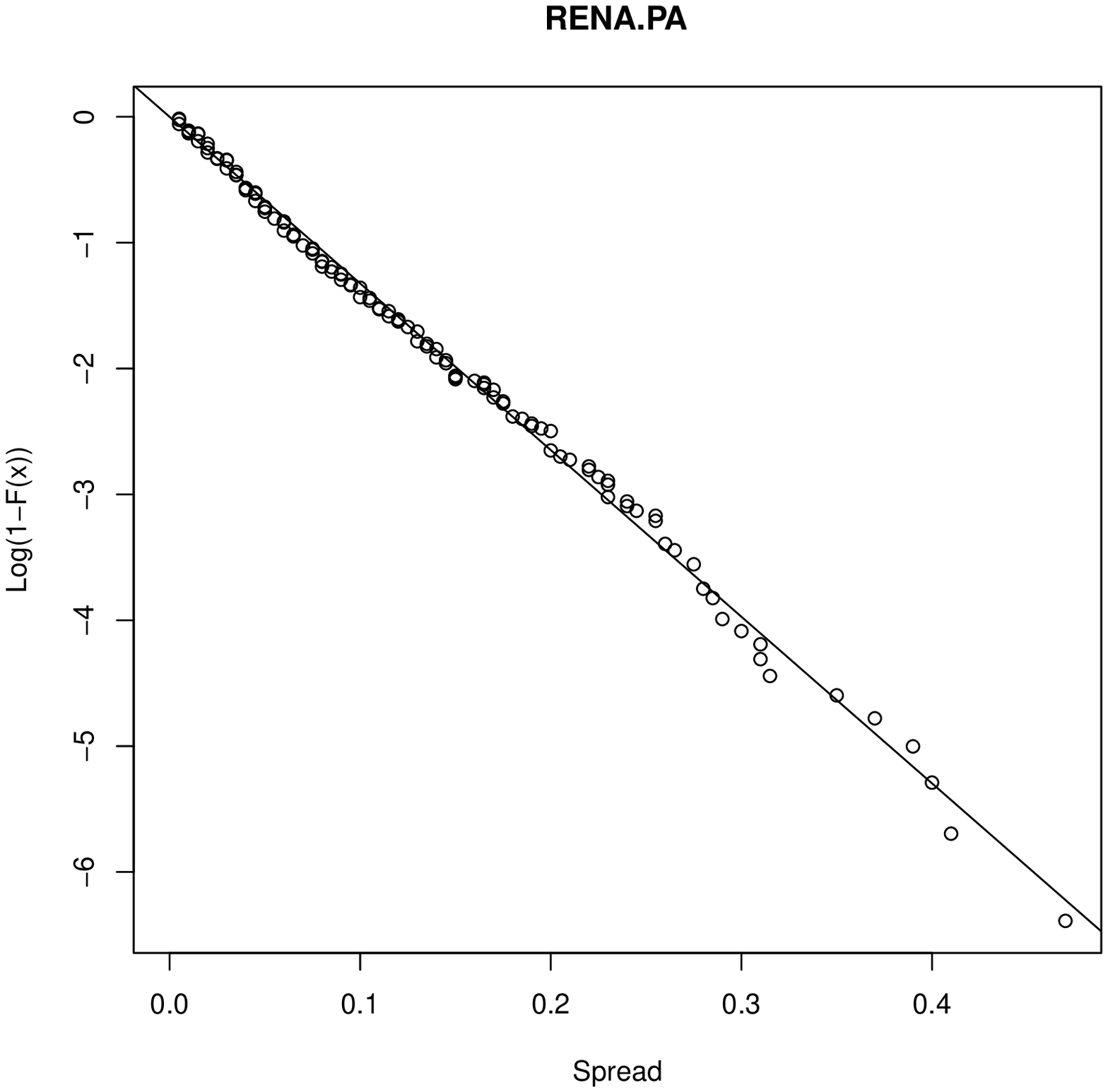}
	\\ \includegraphics[width=0.3\textwidth]{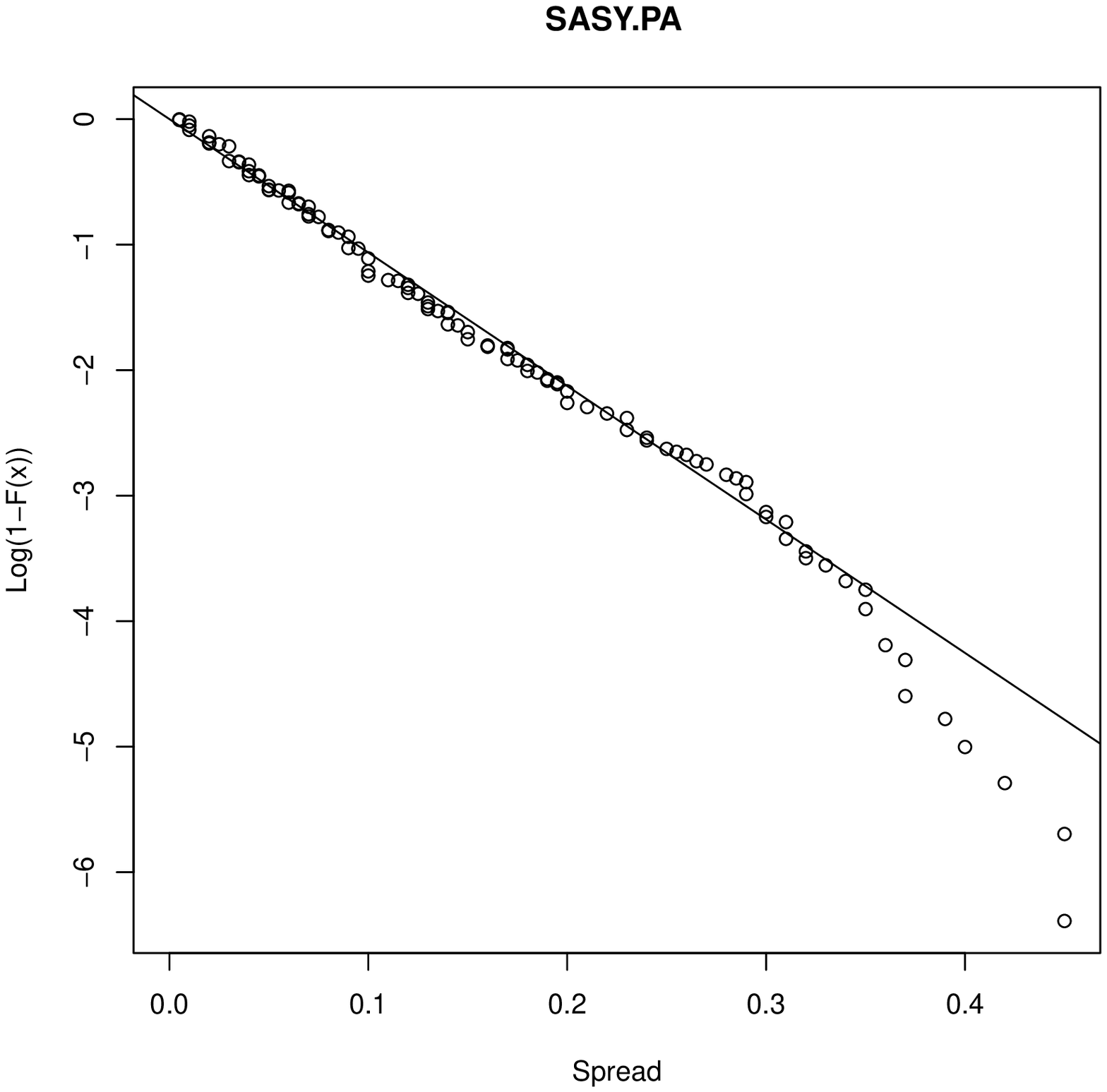}
	& \includegraphics[width=0.3\textwidth]{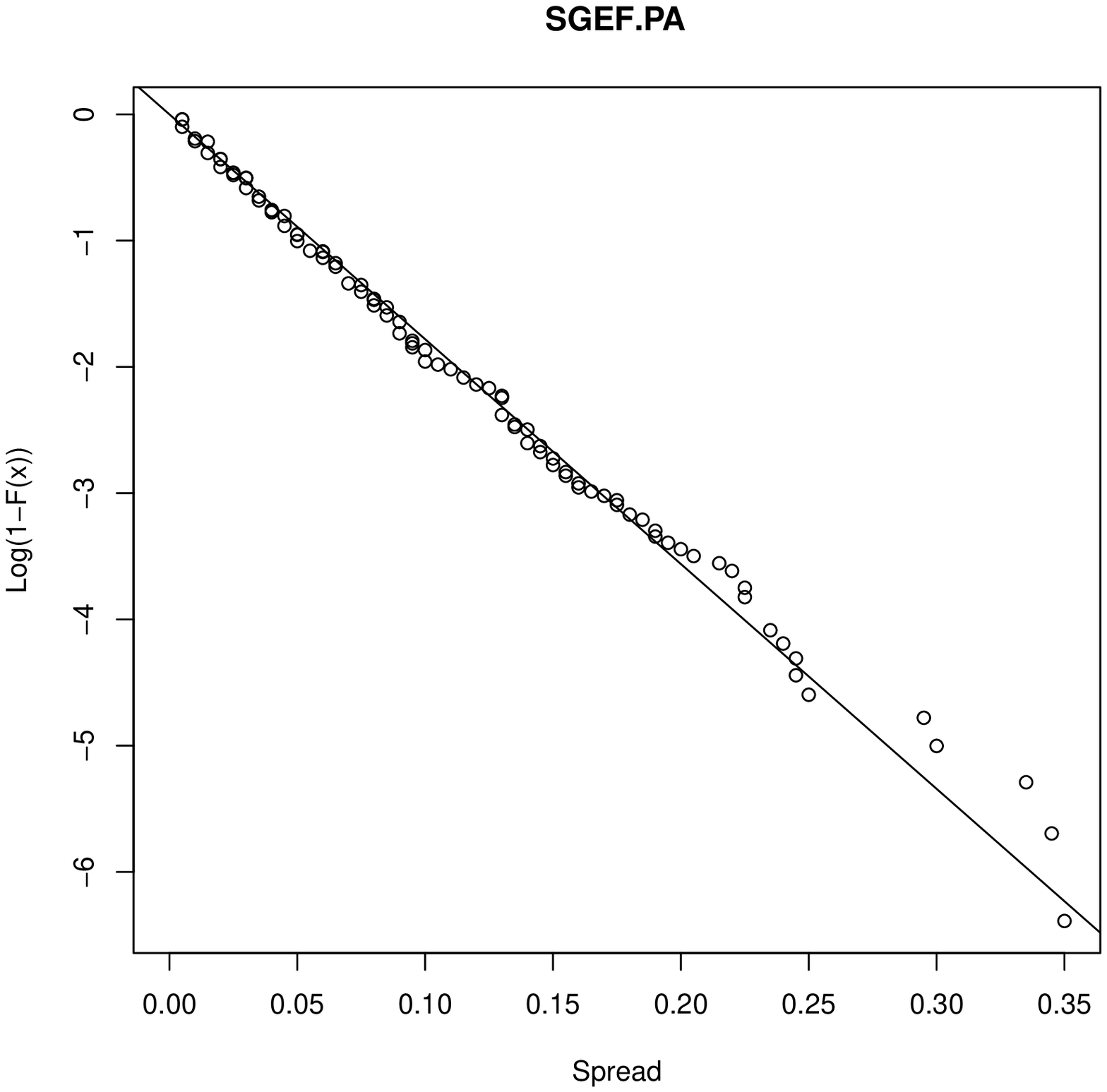}
	& \includegraphics[width=0.3\textwidth]{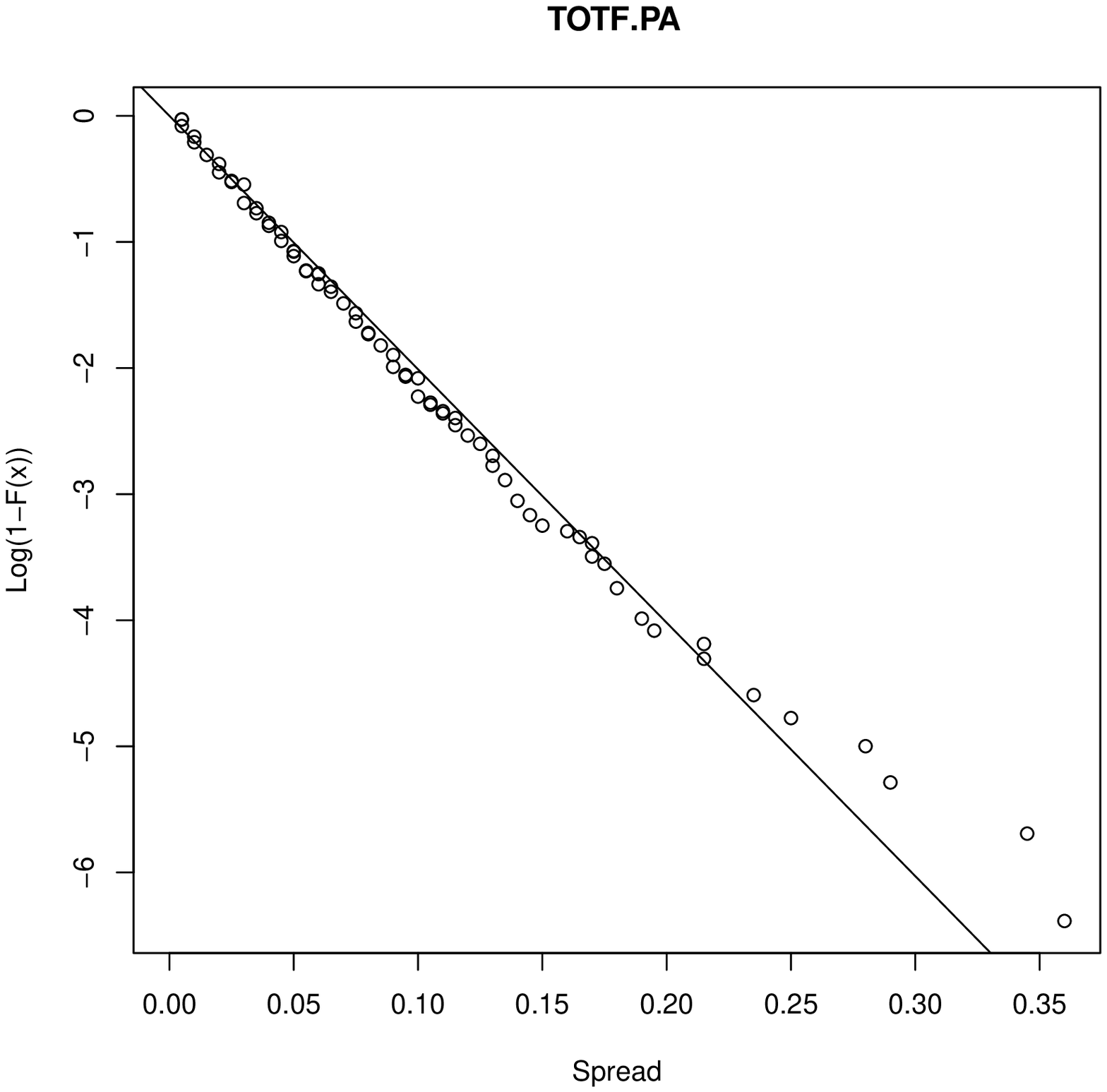}
\end{tabular}
\caption{Logarithm of the empirical survival function of the distribution of the post-clearing spread computed for $12$ CAC 40 stocks traded on the Paris Stock Exchange from March 2011 to June 2013. Straight line is the exponential MLE fit.}
\label{fig:SpreadExpFit-SurvivalFunction}
\end{figure}

These empirical observations indicate that the exponential distribution predicted for the clearing price range by the basic Poisson model might indeed be an acceptable approximation of the distribution of the spread computed by the clearing mechanism of the opening call auction.
As a future work, the next empirical step would be to test whether a valid link can be establish between the parameter of this exponential distribution and the parameters defining the flows of submitted orders during the call, in order to (in)validate the scaling suggested in proposition \ref{prop:AsymptoticPriceRange}. Note however that this will require a completely different set of data, since all the order book data (not just the bid and ask quotes) during all the pre-opening call period (not just the opening values) is needed to estimate the parameters of the order flow.

\section{Allowing for the cancellation of trading orders}
\label{sec:Cancellation}

We have until now presented results in a call auction model where submitted orders cannot be cancelled. This has allowed us easy comparisons with previous results.
It is however simple to add a cancellation mechanism. 
Let us assume from now on that any submitted order may be cancelled between the moment of its submission and the closing of the call auction at $T$. Let us assume furthermore that the lifetimes of orders (i.e. the time interval between their submission and their potential cancellation) form a set of independent random variables distributed exponentially with parameter $\theta_A$ for ask orders, and $\theta_B$ for bid orders.

Then the number of ask orders at time $T$ is distributed as the value of a birth-and-death process with constant birth rate $\lambda_A$ and linear death rate $n\theta_A$.
If $p_n(t)$ denotes the probability that the process is equal to $n$ at time $t$, then the set of Kolmogorov forward equations is written for all $n\in\mathbb N^*$ \citep[see e.g.][chap. 8]{Bremaud1999}:
\begin{equation}
	\frac{dp_n(t)}{dt} = -(\lambda_A+n\theta_A)p_n(t) + \lambda p_{n-1}(t) \mathbf 1_{n\geq 1} + (n+1)\theta_A p_{n+1}(t).
\end{equation}
Classically, this is explicitly solvable through the use of the moment generating function, and we obtain that the number of ask orders at time $T$ is distributed according to a Poisson distribution with parameter $\frac{\lambda_A}{\theta_A}\left(1-e^{-\theta_A T}\right)$. A similar result is obviously obtained for the bid orders.

Therefore, all the developments of the previous sections are applicable to the general model with cancellation. It suffices to substitute $\lambda$ and $\alpha$ in the previous formulas from proposition \ref{fig:TradedVolumeDistribution} to proposition \ref{fig:PriceRangeConvergence}, with respectively $\lambda'$ and $\alpha'$ defined by :
\begin{align}
	\lambda' & = \frac{\lambda}{T}\left(
		\frac{\alpha}{\theta_A}(1-e^{-\theta_A T}) + \frac{1-\alpha}{\theta_B}(1-e^{-\theta_B T})
	\right)
	\\
	\alpha' & = \frac{1}{1 + \left(\frac{1}{\alpha}-1\right)
	\frac{\theta_A}{\theta_B} \frac{1-e^{-\theta_B T}}{1-e^{-\theta_A T}}}  
\end{align}
Obviously, if $T\to 0$ or equivalently if $\theta_A\to 0$ and $\theta_B\to 0$, then the cancellation mechanism is negligible and $\alpha'\to\alpha$ and $\lambda'\to\lambda$.
Note also that in the case $\theta_A=\theta_B=\theta$, then both bid and ask are modified the same way, and as expected $\alpha'=\alpha$, i.e. market imbalance is unchanged.
However, if $\theta_A$ and $\theta_B$ are different, then the cancellations may either increase or lessen the imbalance of the market.

The formulas obtained are in accordance with the expected effect of the cancellation mechanism on the call auction : as $\theta_A$ and $\theta_B$ increases, less orders enter the clearing process at time $T$, and therefore the traded volume decreases, while the clearing price range and the variance of the clearing prices increase.
More precisely, the scaling obtained is roughly proportional. Let us simplify the formulas to follow by assuming that bid and ask orders have the same expected lifetime $\theta_A^{-1}=\theta_B^{-1}=\theta^{-1}$.
In the liquid case where $\lambda\to+\infty$, the asymptotic distribution of the traded volume is then:
\begin{equation}
	\mathcal N\left(\frac{\lambda}{\theta}(1-e^{-\theta T})\alpha(1-\alpha)
	,
	\sqrt{ \frac{\lambda}{\theta}(1-e^{-\theta T}) \alpha(1-\alpha)(1-2\alpha(1-\alpha))}
	 \right),
\end{equation}
i.e. the mean traded volume is roughly inversely proportional to $\theta$.
The asymptotic distribution of the clearing price is:
\begin{equation}
	\mathcal N\left( F^{-1}(1-\alpha)
	,
	\frac{1}{f(F^{-1}(1-\alpha))}
	\sqrt{\frac{2\theta\alpha(1-\alpha)}{\lambda(1-e^{-\theta T})}}
	 \right),
\end{equation}
and finally, the asymptotic distribution of the clearing price range is:
\begin{equation}
	\mathcal E\left(\frac{\lambda}{\theta} (1-e^{-\theta T}) f(F^{-1}(1-\alpha)) \right),
\end{equation}
i.e. both the variance of the prices and the average clearing price range are roughly proportional to $\theta$, the inverse of the expected lifetimes of orders.

\section{Conclusion}
\label{sec:Conclusion}
We have derived the exact and asymptotic distributions of the traded volume, of the lower and upper bound of the clearing prices and of the clearing price range of a basic call auction model in which trading orders are submitted according to Poisson processes, with independent random prices with any general distribution, and a Markovian mechanism of cancellation.
The results obtained here generalize previous studies of the call auction in which the mathematical problem was only tackle through approximations.
Our work will hopefully raise the interest of the community in the need for more theoretical and empirical studies of the call auction. Firstly, the call auction is already widely used as a trading mechanism for the opening and closing of many financial markets, but has received until now considerably less attention than its continuous counterpart. Secondly, the recent regulation movement and the need to avoid dangerous phenomena directly linked to high-frequency trading strategies in a continuous double auction may 
be an opportunity for the call auction proponents to ascertain the potential benefits of this trading system. In any case, this calls for more empirical and theoretical works on the call auction.


\begin{thebibliography}{}

\bibitem[\protect\citename{Abergel \& Jedidi, }2013]{Abergel2013}
Abergel, Fr{\'e}d{\'e}ric, \& Jedidi, Aymen. 2013.
\newblock A mathematical approach to order book modeling.
\newblock {\em International Journal of Theoretical and Applied Finance}, {\bf
  16}(05).

\bibitem[\protect\citename{Arnold {\em et~al.\ }\relax, }2008]{Arnold2008}
Arnold, Barry~C, Balakrishnan, Narayanaswamy, \& Nagaraja, Haikady~Navada.
  2008.
\newblock {\em A first course in order statistics}.
\newblock Classics in Applied Mathematics, vol. 54.
\newblock SIAM.
\newblock Reprint.

\bibitem[\protect\citename{Biais {\em et~al.\ }\relax, }1995]{Biais1995}
Biais, B., Hillion, P., \& Spatt, C. 1995.
\newblock An empirical analysis of the limit order book and the order flow in
  the Paris Bourse.
\newblock {\em Journal of Finance}, {\bf 50}(5), 1655–1689.

\bibitem[\protect\citename{Billingsley, }1999]{Billingsley1999}
Billingsley, Patrick. 1999.
\newblock {\em Convergence of probability measures}. 2nd edn.
\newblock John Wiley \& Sons.

\bibitem[\protect\citename{Bouchaud {\em et~al.\ }\relax, }2002]{Bouchaud2002}
Bouchaud, Jean-Philippe, Mézard, Marc, \& Potters, Marc. 2002.
\newblock Statistical properties of stock order books: empirical results and
  models.
\newblock {\em Quantitative Finance}, {\bf 2}(4), 251--256.

\bibitem[\protect\citename{Bouchaud {\em et~al.\ }\relax, }2009]{Bouchaud2009}
Bouchaud, Jean-Philippe, Farmer, John~D., \& Lillo, Fabrizio. 2009.
\newblock How Markets Slowly Digest Changes in supply and Demand.
\newblock {\em In:} Hens, Thorsten, \& Schenk-Hoppé, Klaus~Reiner (eds), {\em
  Handbook of Financial Markets: Dynamics and Evolution}.
\newblock Elsevier.

\bibitem[\protect\citename{Bremaud, }1999]{Bremaud1999}
Bremaud, Pierre. 1999.
\newblock {\em Markov chains: Gibbs fields, Monte Carlo simulation, and
  queues}.
\newblock Texts in Applied Mathematics, Vol. 31.
\newblock Springer.

\bibitem[\protect\citename{Chakraborti {\em et~al.\ }\relax,
  }2011]{Chakraborti2011}
Chakraborti, Anirban, {Muni {T}oke}, Ioane, Patriarca, Marco, \& Abergel,
  Frédéric. 2011.
\newblock Econophysics review: I. Empirical facts.
\newblock {\em Quantitative Finance}, {\bf 11}(7), 991--1012.

\bibitem[\protect\citename{Challet \& Stinchcombe, }2001]{Challet2001}
Challet, Damien, \& Stinchcombe, Robin. 2001.
\newblock Analyzing and modeling 1+1d markets.
\newblock {\em Physica A: Statistical Mechanics and its Applications}, {\bf
  300}(1–2), 285--299.

\bibitem[\protect\citename{Chan {\em et~al.\ }\relax, }1995]{Chan1995}
Chan, Kalok~C, Christie, William~G, \& Schultz, Paul~H. 1995.
\newblock Market structure and the intraday pattern of bid-ask spreads for
  NASDAQ securities.
\newblock {\em Journal of Business}, {\bf 68}(1), 35--60.

\bibitem[\protect\citename{Comerton-Forde \& Rydge, }2006a]{ComertonForde2006}
Comerton-Forde, Carole, \& Rydge, James. 2006a.
\newblock Call auction algorithm design and market manipulation.
\newblock {\em Journal of Multinational Financial Management}, {\bf 16}(2), 184
  -- 198.

\bibitem[\protect\citename{Comerton-Forde \& Rydge, }2006b]{ComertonForde2006b}
Comerton-Forde, Carole, \& Rydge, James. 2006b.
\newblock The influence of call auction algorithm rules on market efficiency.
\newblock {\em Journal of Financial Markets}, {\bf 9}(2), 199 -- 222.

\bibitem[\protect\citename{Cont {\em et~al.\ }\relax, }2010]{Cont2010}
Cont, Rama, Stoikov, Sasha, \& Talreja, Rishi. 2010.
\newblock A Stochastic Model for Order Book Dynamics.
\newblock {\em Operations Research}, {\bf 58}(3), 549--563.

\bibitem[\protect\citename{David \& Nagaraja, }2003]{David2003}
David, Herbert~Aron, \& Nagaraja, Haikady~Navada. 2003.
\newblock {\em Order statistics}. 3rd edn.
\newblock John Wiley \& Sons.

\bibitem[\protect\citename{Domowitz \& Wang, }1994]{DomowitzWang1994}
Domowitz, Ian, \& Wang, Jianxin. 1994.
\newblock Auctions as algorithms: Computerized trade execution and price
  discovery.
\newblock {\em Journal of Economic Dynamics and Control}, {\bf 18}(1), 29 --
  60.
\newblock Special Issue on Computer Science and Economics.

\bibitem[\protect\citename{Euronext, }2013]{Euronext2013}
Euronext. 2013.
\newblock {\em Euronext Instruction 4-01, Trading Manual for the Universal Trading Platform}.
\newblock Available online at \url{https://www.euronext.com/en/regulation/organization-of-trading}.

\bibitem[\protect\citename{Euronext, }2014]{Euronext2014}
Euronext. 2014.
\newblock {\em Appendix to Euronext Instruction 4-01 Euronext Cash Market Trading Manual}.
\newblock Available online at \url{https://www.euronext.com/en/regulation/organization-of-trading}.


\bibitem[\protect\citename{Feller, }1968]{Feller1968}
Feller, Willliam. 1968.
\newblock {\em An introduction to probability theory and its applications}. 3rd
  edn.
\newblock  Vol. I.
\newblock John Wiley \& Sons.

\bibitem[\protect\citename{Gould, }1956]{Gould1956}
Gould, H.~W. 1956.
\newblock Some Generalizations of Vandermonde's Convolution.
\newblock {\em The American Mathematical Monthly}, {\bf 63}(2), 84--91.

\bibitem[\protect\citename{Gu {\em et~al.\ }\relax, }2007]{Gu2007}
Gu, G-F, Chen, Wei, \& Zhou, W-X. 2007.
\newblock Quantifying bid-ask spreads in the Chinese stock market using
  limit-order book data.
\newblock {\em The European Physical Journal B-Condensed Matter and Complex
  Systems}, {\bf 57}(1), 81--87.

\bibitem[\protect\citename{Korolev, }1993]{Korolev1993}
Korolev, V.~Y. 1993.
\newblock On limit distributions of randomly indexed random sequences.
\newblock {\em Theory of Probability \& Its Applications}, {\bf 37}(3),
  535--542.

\bibitem[\protect\citename{Mendelson, }1982]{Mendelson1982}
Mendelson, Haim. 1982.
\newblock Market Behavior in a Clearing House.
\newblock {\em Econometrica}, {\bf 50}(6), 1505--1524.

\bibitem[\protect\citename{Mike \& Farmer, }2008]{Mike2008}
Mike, Szabolcs, \& Farmer, J.~Doyne. 2008.
\newblock An empirical behavioral model of liquidity and volatility.
\newblock {\em Journal of Economic Dynamics and Control}, {\bf 32}(1),
  200--234.

\bibitem[\protect\citename{{Muni Toke}, }2014]{MuniToke2014}
{Muni Toke}, Ioane. 2014.
\newblock The order book as a queueing system: average depth and influence
  of the size of limit orders.
\newblock {\em Quantitative Finance}, to appear. 

\bibitem[\protect\citename{Pagano \& Schwartz, }2003]{Pagano2003}
Pagano, Michael~S., \& Schwartz, Robert~A. 2003.
\newblock A closing call's impact on market quality at Euronext Paris.
\newblock {\em Journal of Financial Economics}, {\bf 68}(3), 439 -- 484.

\bibitem[\protect\citename{Plerou {\em et~al.\ }\relax, }2005]{Plerou2005}
Plerou, Vasiliki, Gopikrishnan, Parameswaran, \& Stanley, H~Eugene. 2005.
\newblock Quantifying fluctuations in market liquidity: Analysis of the bid-ask
  spread.
\newblock {\em Physical Review E}, {\bf 71}(4), 046131.

\bibitem[\protect\citename{Potters \& Bouchaud, }2003]{Potters2003}
Potters, Marc, \& Bouchaud, Jean-Philippe. 2003.
\newblock More statistical properties of order books and price impact.
\newblock {\em Physica A: Statistical Mechanics and its Applications}, {\bf
  324}(1–2), 133--140.

\bibitem[\protect\citename{Preis {\em et~al.\ }\relax, }2006]{Preis2006}
Preis, T, Golke, S, Paul, W, \& Schneider, J.~J. 2006.
\newblock Multi-agent-based Order Book Model of financial markets.
\newblock {\em Europhysics Letters {(EPL)}}, {\bf 75}(3), 510--516.

\bibitem[\protect\citename{Pyke, }1965]{Pyke1965}
Pyke, R. 1965.
\newblock Spacings.
\newblock {\em Journal of the Royal Statistical Society. Series B
  (Methodological)}, {\bf 27}(3), 395--449.

\bibitem[\protect\citename{Schwartz {\em et~al.\ }\relax, }2003]{Schwartz2003}
Schwartz, Robert~A, Byrne, John~Aidan, \& Colaninno, Antoinette. 2003.
\newblock {\em Call Auction Trading: New Answers to Old Questions}.
\newblock Kluwer Academic Publishers.

\bibitem[\protect\citename{Seaborn, }1991]{Seaborn1991}
Seaborn, James~B. 1991.
\newblock {\em Hypergeometric functions and their applications}.
\newblock Springer.

\bibitem[\protect\citename{Smith {\em et~al.\ }\relax, }2003]{Smith2003}
Smith, E., Farmer, J.~D., Gillemot, L., \& Krishnamurthy, S. 2003.
\newblock Statistical theory of the continuous double auction.
\newblock {\em Quantitative finance}, {\bf 3}(6), 481–514.

\end{thebibliography}
\end{document}